\documentclass{amsart}[13pt]
\usepackage{amsmath}
\usepackage{graphicx}
\usepackage{graphicx,xcolor}
\usepackage{amsfonts} 
\usepackage{amssymb}
\usepackage{pdfsync}
\usepackage{color}

\usepackage{epsfig}
\theoremstyle{plain}
\newtheorem{theorem}{Theorem}[section]

\newtheorem{lemma}[theorem]{Lemma}
\newtheorem{assumption}{Assumption}
\newtheorem*{assumption*}{Assumption}
\newtheorem{proposition}[theorem]{Proposition}

\newtheorem{remark}[theorem]{Remark}

\theoremstyle{definition}
\newtheorem*{remark*}{Remark}

\theoremstyle{remark}
\numberwithin{equation}{section}

\newcommand{\D}{\mathcal{D}}

\newcommand{\ep}{\varepsilon}

\newcommand{\ffi}{\varphi}

\newcommand{\R}{\mathbb{R}}

\newcommand{\PP}{\mathbb{P}}
\newcommand{\EE}{\mathbb{E}}

\newcommand{\T}{\mathcal{T}}
\newcommand{\LL}{\mathcal{L}}

\newcommand{\mbf}{\mathbf}

\newcommand{\h}{\tilde h_\ep}

\usepackage{latexsym}

\title[Derivation of the Fick's law.]{Derivation of the Fick's law for the Lorentz model in a low density regime}
\author[G. Basile]
{G. Basile}
\address[Giada Basile]{Dipartimento di Matematica ``Guido Castelnuovo'', Sapienza Universit\`a di Roma, P.le Aldo Moro 5, I-00185 Roma, Italy}
\email[G. Basile]{basile@mat.uniroma1.it}
\author[A. Nota]
{A. Nota}
\address[Alessia Nota]{Dipartimento di Matematica ``Guido Castelnuovo'', Sapienza Universit\`a di Roma, P.le Aldo Moro 5, I-00185 Roma, Italy}
\email[A. Nota]{nota@mat.uniroma1.it}
\author[F. Pezzotti]
{F. Pezzotti}
\address[Federica Pezzotti]{Dipartimento di Matematica ``Guido Castelnuovo'', Sapienza Universit\`a di Roma, P.le Aldo Moro 5, I-00185 Roma, Italy}
\email[F. Pezzotti]{pezzotti@mat.uniroma1.it}
\author[M. Pulvirenti]
{M. Pulvirenti}
\address[Mario Pulvirenti]{Dipartimento di Matematica ``Guido Castelnuovo'', Sapienza Universit\`a di Roma, P.le Aldo Moro 5, I-00185 Roma, Italy}
\email[M. Pulvirenti]{pulvirenti@mat.uniroma1.it}

\begin{document}
\begin{abstract}
We consider the Lorentz model in a slab with two mass reservoirs at the boundaries. 
 We show that, in a low density regime, there exists a unique stationary solution for the microscopic dynamics which converges to the stationary solution of the heat equation, namely to the linear profile of the density. In the same regime the macroscopic current in the stationary state is given by the Fick's law, with the diffusion coefficient determined by the Green-Kubo formula.

\end{abstract}
\maketitle

\tableofcontents

\section{Introduction}
One of the most important and challenging problem in the rigorous approach to non-equilibrium Statistical Mechanics is the characterization of stationary nonequilibrium states exhibiting transport  phenomena such as energy or mass transport,  which are macroscopically described by Fourier's  and Fick's law respectively.
A simple microscopic model to validate the Fick's Law is the Lorentz gas, namely a system of non interacting light particles in a distribution of scatterers, in contact with two mass reservoirs. One expects that under a suitable space-time scaling (hydrodynamical limit) the stationary mass current is proportional to the gradient of the density. However the rigorous proof of that is a difficult and still open problem. 



In this paper we propose a contribution in this direction 
in a situation of low density.
The system we study is the following. Consider the two-dimensional strip
$
\Lambda= (0,L)  \times \R .
$
In the left and in the right of the boundaries,  $ \{0\}\times \R$ and $\{L\}\times \R$ respectively, there are two mass reservoirs constituted by free point particles at equilibrium at different densities $\rho_1$, $\rho_2$.
Inside the strip there is a random distribution of hard disks of radius $\ep$, distributed according to a Poisson law with density $\mu_{\ep}$. Here $\ep$ is a small scale parameter and we let it go to zero.  In the mean time 
$\mu_{\ep}$ is diverging in such a way that  $\mu_{\ep} \ep \to \infty $ and $\mu_{\ep} \ep^2 \to 0 $. Therefore the scatterer configuration is dilute. 

The light particles are flowing through the boundaries, from right with density $\rho_2$ and from left with density $\rho_1$ 
They are not interacting among themselves, but are elastically reflected by the obstacles. 
Their mean free paths vanish as $\ep\to 0$, but not too quickly. More precisely they can vanish at most as $\ep^{1-\delta},\, 0<\delta<1,$ in order to have a dilute configuration of scatterers.

We expect that there exists a stationary state for which
\begin{equation}
\label{FL}
J \approx -D \nabla \rho 
\end{equation}
where $J$ is the mass current, $\rho$ is the mass density and $D>0$ is the diffusion coefficient. Formula \eqref{FL} is the well known Fick's law which we want to prove in the present context.

We underline preliminary that our result holds in a low-density regime. This means that we can use the linear Boltzmann equation as a bridge between our original mechanical system and the diffusion equation. This basic idea has been used in  \cite{ESY} \cite{BGS-R} \cite{BNP} to obtain the heat equation from a particle system in different contexts. It works once having an explicit control of the error in the kinetic limit, which suggests the scale of times for which the diffusive limit can be achieved. 
As a consequence the diffusion coefficient $D$ is given by the Green-Kubo formula for the kinetic equation at hand (namely linear Quantum Boltzmann for \cite{ESY}, linear Boltzmann for \cite{BGS-R}, linear Landau for \cite{BNP}).
In the present paper we work in a stationary situation for which we face new problems which will be discussed later on. 


The idea of using the linear Boltzmann equation for the Lorentz gas in not new. In \cite{LS} the authors consider exactly our system but with two thermal reservoirs at different temperatures at the boundaries. The aim was to study the energy flux in a stationary regime. However, as pointed out in \cite{LS}, due to the energy conservation of a single elastic collision, the energy is not diffused, there is no local equilibrium and hence the local temperature is not defined. As a consequence the Fourier's law fails to hold, at least in the conventional sense.

This is the reason why we consider here the mass transport, being the heat equation for the mass density the unique hydrodynamical equation. 

It may be worth to mention that, for a suitable stochastic dynamics, the Fourier's law can indeed be derived, see \cite{KMP}, \cite{GKMP}. 

Concerning the Fick's law we mention the papers \cite{LS1}, \cite{LS2}, for the self-diffusion of a tagged particle in a gas at equilibrium.

Our paper is organized as follows.
The starting point is the transition from the mechanical system to the Boltzmann equation in a low density regime. We follow the classical analysis due to Gallavotti \cite{G}, complemented  by an explicit analysis of the bad events preventing the Markovianity, in the same spirit of \cite{DP}, \cite{DR} . This is necessary  to reach a diffusive behavior on a longer time scale as in \cite{BGS-R}, \cite{BNP}.

Moreover we point out that our  initial boundary value problem presents a new feature due to the presence of the first exit  (stopping)  time. This difficulty is handled by an extension procedure which  essentially reduces our problem to the corresponding one in the whole space.

The transition from the mechanical system to the linear Boltzmann regime is presented in Section \ref{sec5}.

However we are interested in a stationary problem. This  is handled, more conveniently, in terms of a Neumann series 
to overcome problems connected with the exchange of the limits $t \to \infty$, $\ep \to 0$. To the best of our knowledge this is a new tool. 
This analysis is presented in Section \ref{proofs}. The basic idea is that the explicit solution of the heat equation and the control of the time dependent problem allow us to characterize the stationary solution of the linear Boltzmann equation and this turns out  to be the basic tool to obtain the stationary solution of the mechanical system which is the basic object of our investigation.

Finally the transition from Boltzmann to the diffusion equation is classical and ruled out by the Hilbert  expansion method which is presented in Section \ref{sec4HILBERT}.
This step is discussed in detail, not only for completeness, but also because we need an apparently new analysis in $L^\infty$, for the time dependent problem (needed for the control of the Neumann series) and a $L^2$ analysis for the stationary problem.

\section{The model and main results}\label{sec2}
Let $\Lambda \subset \R^2$ be the strip $(0,L)\times \R$.
We consider a Poisson distribution of fixed hard disks (scatterers) of radius $\ep$ in $\Lambda$ and denote by $c_1,\dots,c_N\in\Lambda$ their centers. 
This means that, given $\mu>0$, the probability density of finding $N$ obstacles in a bounded measurable set $A\subset\Lambda$ is 
\begin{equation}\label{poisson}
\PP(\,d\mbf{c}_{N})=e^{-\mu |A|}\frac{\mu^N}{N!}\,dc_1\dots\,dc_N
\end{equation}
where $|A|=\text{meas}A$ and $\mbf{c}_{N}=(c_1,\dots, c_N)$.\\

A particle in $\Lambda$ moves freely up to the first instant of contact with an obstacle. Then it is elastically reflected and so on. Since the modulus of the velocity of the test particle is constant, we assume it to be equal to one, so that the phase space of our system is $\Lambda\times S_1$.

We rescale the intensity $\mu$ of the obstacles as 
$$
\mu_{\varepsilon}=\varepsilon^{-1}\eta_{\ep}\mu,
$$
where, from now on, $\mu>0$ is fixed and $\eta_{\ep}$ is slowly diverging as $\ep\to 0$. More precisely we make the following assumption.
\begin{assumption}\label{A1}
As $\varepsilon \to 0$, $\eta_\ep$ diverges in such a way that 
\begin{equation}\label{assump}
\ep^{\frac{1}{2}}\eta_\ep^6\to 0.
\end{equation}
\end{assumption}

The behaviour \eqref{assump} is dictated mostly by the recollision estimates in Section \ref{pathological}.

We denote by $\PP_{\ep}$  the probability density \eqref{poisson} with $\mu$ replaced by $\mu_\ep$. $\EE_{\ep}$ will be the expectation with respect to the measure $\PP_{\ep}$ restricted on those configurations of the obstacles whose centers do not belong to the disk of center $x$ and radius $\ep$. 

For a given configuration of obstacles $\mbf{c}_N$, we denote by $T^{-t}_{\mbf{c}_{N}}(x,v)$ the (backward) 
flow with initial datum $(x,v)\in\Lambda\times S_1$ and define
$t-\tau$, $\tau=\tau(x,v,t,\mbf{c}_N)$, as the first (backward) hitting time with the boundary. We use the notation $\tau=0$ to indicate the event such that the trajectory $T^{-s}_{\mbf{c}_{N}}(x,v)$, $s\in [0,t]$, never hits the boundary.
For any $t\geq 0$ the one-particle correlation function reads
\begin{equation}\label{def:fep}
f_{\ep}(x,v,t)=\EE_\varepsilon[f_B (T^{-(t-\tau)}_{\mbf{c}_{N}}(x,v))\chi(\tau>0)] + \EE_\ep[f_0 (T^{-t}_{\mbf{c}_{N}}(x,v))\chi(\tau=0)],
\end{equation} 
where $f_0\in L^\infty(\Lambda\times S_1)$ and the boundary value $f_B$ is defined by
\begin{equation*}
f_B(x,v):=\left\{\begin{array}{ll}
\rho_1 M(v)\quad\text{if}\quad x\in \{0\}\times\R,\quad v_1>0,&\vspace{3mm} \\
\rho_2 M(v)\quad\text{if}\quad x\in \{L\}\times\R,\quad v_1<0,& \vspace{3mm}
\end{array}\right.
\end{equation*}
with $M(v)$ the density of the uniform distribution on $S_1$ and $\rho_1, \rho_2>0$.  Here $v_1$ denotes the horizontal component of the velocity $v$.
Without loss of generality we assume $\rho_2>\rho_1$.
Since $M(v)=\frac{1}{2\pi}$, from now on we will absorb it in the definition of the boundary values $\rho_1, \rho_2$. Therefore we set
\begin{equation}\label{def:fB}
f_B(x,v):=\left\{\begin{array}{ll}
\rho_1\quad\text{if}\quad x\in \{0\}\times\R,\quad v_1>0,&\vspace{3mm} \\
\rho_2\quad\text{if}\quad x\in \{L\}\times\R,\quad v_1<0.& \vspace{3mm}
\end{array}\right.
\end{equation}

\begin{remark*} Here we allow overlapping of scatterers, namely the Poisson measure is that of a free gas. It would also be possible to consider the Poisson measure restricted to non-overlapping configurations, namely the Gibbs measure for a systems of hard disks in the plane. However the two measures are asymptotically equivalent and the result does hold also in the last case.

Note also that the dynamics $T^{t}_{\mbf{c}_{N}}$  is well defined only almost everywhere with respect to $\PP_{\ep}$. 
\end{remark*}

We are interested in the stationary solutions $f_{\ep}^S$ of the above problem. More precisely for any $t\geq 0$ $f_\ep^S(x,v)$ solves  
\begin{equation}\label{def:ST}
f_{\ep}^S(x,v)=\EE_\varepsilon[f_B (T^{-(t-\tau)}_{\mbf{c}_{N}}(x,v))\chi(\tau>0)] + \EE_\ep[f_ \ep^S(T^{-t}_{\mbf{c}_{N}}(x,v))\chi(\tau=0)].
\end{equation} 

The main result of the present paper can be summarized in the following theorem.

\begin{theorem}\label{th:MAIN1}
For $\ep$ sufficiently small there exists a unique $L^\infty$ stationary solution $f_\ep^S$ for the microscopic dynamics (i.e. satisfying \eqref{def:ST}). Moreover, as $\varepsilon \to 0$ 
\begin{equation}\label{convergenzaSTAZ}
f_\ep^S\rightarrow \varrho^S,
\end{equation}
where $\varrho^S$ is the stationary solution of the heat equation with the following boundary conditions
\begin{equation}\label{HEAT}
\left\{
 \begin{array}{l}\vspace{0.2cm}
\varrho^S(x)=\rho_1,\ \ \ \ \ x\in \{0\}\times\R,  
\vspace{0.2cm}\\
\varrho^S(x)=\rho_2,\ \ \ \ \ x\in \{L\}\times\R.
\end{array} \right.
\end{equation}
The convergence is in $L^{2}((0,L)\times S_1)$.
\end{theorem}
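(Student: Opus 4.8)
The plan is to interpose the stationary linear Boltzmann equation between the mechanical problem \eqref{def:ST} and the heat equation \eqref{HEAT}, and to organize the whole argument around a Neumann series so as to bypass the exchange of the limits $t\to\infty$ and $\ep\to0$. For the existence and uniqueness part I would first read \eqref{def:ST}, valid for every $t>0$, as the fixed point $f_\ep^S=\Phi_t f_B+Q_t f_\ep^S$, where $(\Phi_t h)(x,v)=\EE_\ep[h(T^{-(t-\tau)}_{\mbf{c}_N}(x,v))\chi(\tau>0)]$ carries the boundary datum along trajectories that have already met the boundary, and $(Q_t f)(x,v)=\EE_\ep[f(T^{-t}_{\mbf{c}_N}(x,v))\chi(\tau=0)]$ propagates the ones that have not. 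Since a unit-speed particle in a strip of width $L$ reaches the boundary backward in an order-one time with probability bounded below uniformly in $(x,v)$---and, thanks to the kinetic control below, uniformly in $\ep$---one has $\|Q_t\|_{L^\infty\to L^\infty}\le\lambda<1$ for $t$ fixed large enough. The Neumann series $f_\ep^S=\sum_{n\ge0}Q_t^{\,n}\Phi_t f_B$ then converges in $L^\infty$, yielding the unique stationary solution; positivity of the operators together with $\Phi_t c+Q_t c=c$ for constants gives the a priori bound $\rho_1\le f_\ep^S\le\rho_2$.

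Next I would transport this representation to the kinetic level. Writing $g_\ep^S=\sum_{n\ge0}\tilde Q_t^{\,n}\tilde\Phi_t f_B$ for the stationary solution of the linear Boltzmann equation with the same boundary datum, built from the Markovian semigroup, I would estimate each pair of terms $\|Q_t^{\,n}\Phi_t f_B-\tilde Q_t^{\,n}\tilde\Phi_t f_B\|_{L^\infty}$. This rests on the Gallavotti expansion of the mechanical flow into collision sequences, complemented by the explicit bound on the bad events (recollisions and overlapping tubes) that destroy Markovianity, the total contribution being kept negligible by Assumption \ref{A1}. The complication of the first exit time $\tau$ is absorbed by the extension procedure, which removes the boundary and reduces each term to a whole-space kinetic estimate. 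Summing the resulting geometric bound gives $\|f_\ep^S-g_\ep^S\|\to0$.

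It remains to pass from $g_\ep^S$ to $\varrho^S$. Because the collision frequency is of order $\mu_\ep\ep\sim\eta_\ep\to\infty$, the stationary Boltzmann equation is the singular perturbation $v\cdot\nabla_x g_\ep^S=\eta_\ep\,\LL g_\ep^S$, with $\LL$ the collision operator on $S_1$. I would run the Hilbert expansion $g_\ep^S=g_0+\eta_\ep^{-1}g_1+\eta_\ep^{-2}g_2+\dots$: the leading order forces $\LL g_0=0$, hence $g_0=\varrho^S(x)$ is independent of $v$; the first corrector solves $\LL g_1=v\cdot\nabla_x\varrho^S$; and the solvability condition $\int_{S_1}v\cdot\nabla_x g_1\,dv=0$ at the next order produces $D\,\Delta\varrho^S=0$, with $D=-\int_{S_1}v\cdot\LL^{-1}v\,dv$ the Green--Kubo diffusion coefficient. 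Together with \eqref{HEAT} this selects the linear profile $\varrho^S$. The kinetic boundary datum $f_B$ does not match the Dirichlet datum pointwise, so boundary layers must be inserted and shown to contribute negligibly in $L^2((0,L)\times S_1)$, which is where the final convergence is read off.

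The main obstacle is the second stage: keeping the kinetic error small uniformly along the Neumann series. Each successive term probes mechanical trajectories of growing length, over which the number of collision pairs---and hence of potential recollisions---grows, so the naive Gallavotti estimate degrades; it is precisely the quantitative bookkeeping of the bad events, summable under $\ep^{1/2}\eta_\ep^6\to0$, that lets the term-by-term comparison survive the summation. The delicate companion point is that this control must be performed in $L^\infty$ (to dominate the series) while the limiting statement lives in $L^2$, so the two functional settings have to be reconciled.
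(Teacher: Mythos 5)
Your overall architecture is the same as the paper's (Neumann series for the stationary state, term-by-term comparison with the Markovian kinetic series via the Gallavotti expansion plus recollision/interference/boundary-overlap controls and the extension trick, then a Hilbert expansion from the stationary Boltzmann equation to the linear profile). But there is a genuine gap in your first stage. You claim that the inside-propagator $Q_t$ is a strict $L^\infty$ contraction ``for $t$ fixed large enough,'' justified by the particle reaching the boundary in an order-one time with probability bounded below uniformly in $\ep$. This is false: the collision frequency is $2\mu_\ep\ep=2\mu\eta_\ep\to\infty$, so over a fixed macroscopic time $t$ the backward trajectory performs $\sim\eta_\ep t$ steps of length $\sim\eta_\ep^{-1}$ and its displacement is $O(\sqrt{t/\eta_\ep})\to0$; a particle started in the interior of the strip exits before time $t$ with probability tending to \emph{zero}, so $\|Q_t\|_{L^\infty\to L^\infty}\to1$ and the Neumann series is not uniformly summable. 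The paper resolves this by iterating over the \emph{diverging} time $t_0=\eta_\ep$, on which the diffusive displacement is of order one, and the proof that $\|S^0_\ep(\eta_\ep)\|<\alpha<1$ is itself nontrivial: it requires the $L^\infty$ diffusive limit in the whole plane (Proposition \ref{prop2}, i.e.\ the $L^\infty$ Hilbert expansion) applied to a mollified indicator of the strip, together with the explicit Gaussian computation showing $\varrho^\delta(x,1)<1$. Your phrase ``thanks to the kinetic control below'' does not supply this; the contraction is first proved for the Markovian semigroup and only then transferred to the mechanical one via Proposition \ref{prop:fepinhepin}, not the other way around.

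The choice $t_0=\eta_\ep$ also changes the bookkeeping you describe in your last paragraph: the comparison estimates scale like $C\ep^{1/2}\eta_\ep^3t^2$, which at $t=\eta_\ep$ becomes $C\ep^{1/2}\eta_\ep^5$ per series, and this is exactly why the hypothesis on $\eta_\ep$ involves such high powers (for Theorem \ref{th:MAIN1} the condition $\ep^{1/2}\eta_\ep^5\to0$ suffices; $\eta_\ep^6$ is needed only for the Fick's law). With a fixed iteration time this tension would not be visible, which is a sign the contraction step has been set up incorrectly. The remaining ingredients of your proposal (the maximum-principle bound $\rho_1\le f_\ep^S\le\rho_2$, the $L^2$ Hilbert expansion with the remainder controlled at the boundary --- the paper uses the spectral gap and an explicit formula for $R_{\eta_\ep}$ rather than boundary-layer correctors, and normalizes $D=\frac{1}{4\pi}\int_{S_1}v\cdot(-\LL)^{-1}v\,dv$) are sound and essentially match the paper.
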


Some remarks on the above Theorem are in order. 
The boundary conditions of the problem
depend on the space variable only through the horizontal component. As a consequence, the stationary solution $f_\ep^S$ of the microscopic problem, as well as the stationary solution $\varrho^S$ of the heat equation, inherits the same feature. This justifies the convergence in $L^{2}((0,L)\times S_1)$ instead of in $L^{2}(\Lambda\times S_1)$. 
The explicit expression for the stationary solution $\varrho^S$ reads 
\begin{equation}\label{statheat}
\varrho^S(x)= \frac{\rho_1(L-x_1)+\rho_2 x_1}{ L},
\end{equation}
where $x_1$ is the horizontal component of the space variable $x$. We note that in order to prove Theorem \ref{th:MAIN1} it is enough to assume that $\ep^{\frac 1 2}\eta_{\ep}^5\to 0.$ The stronger Assumption \ref{A1} is needed to prove Theorem \ref{th:MAIN2} below.\\\vspace{2mm}

Next we discuss the Fick's law by introducing the stationary mass flux 
\begin{equation}\label{def:j}
J_\ep^S(x)=\eta_{\ep}\int_{S_1}v\,f_\ep^S(x,v)\,dv,
\end{equation}
and the stationary mass density
\begin{equation}\label{def:mass}
\varrho_\ep^S(x)=\int_{S_1}f_\ep^S(x,v)\,dv.
\end{equation}
Note that $J_\ep^S$ is the total amount of mass flowing through a unit area in a unit time interval. Although in a stationary problem there is no typical time scale, the factor $\eta_{\ep}$ appearing in the definition of $J_\ep^S$, is reminiscent of the time scaling necessary to obtain a diffusive limit.

\begin{theorem}[\textbf{Fick's law}] \label{th:MAIN2}
We have 
\begin{equation}\label{FickL}
J_\ep^S+D\nabla_x\varrho_\ep^S\to 0
\end{equation}
as $\ep\to 0$. The convergence is in $\D'(0,L)$ and $D>0$ is given by the Green-Kubo formula (see \eqref{GK} below).
Moreover
\begin{equation}\label{Jlim}
J^S=\lim_{\ep\to 0}J_\ep^S(x),
\end{equation}
where the convergence is in $L^2(0,L)$
and 
\begin{equation}\label{FICK}
J^S=-D\,\nabla\varrho^S=-D\,\frac{\rho_2-\rho_1}{L},
\end{equation}
where $\varrho^S$ is the linear profile \eqref{statheat}.
\end{theorem}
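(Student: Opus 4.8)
The flux \eqref{def:j} carries the diverging prefactor $\eta_\ep$, while $\int_{S_1}v\,\varrho^S\,dv=0$, so the $L^2$ convergence of $f_\ep^S$ alone (Theorem~\ref{th:MAIN1}) cannot be fed directly into $J_\ep^S$: one has to extract an extra order. The plan is to pass, as in Sections~\ref{sec5} and \ref{proofs}, to the stationary solution $g_\ep^S$ of the rescaled linear Boltzmann equation
\begin{equation*}
v\cdot\nabla_x g_\ep^S=\eta_\ep\,\LL g_\ep^S ,
\end{equation*}
where $\LL$ is the self-adjoint, dissipative Lorentz collision operator on $L^2(S_1)$ with $\ker\LL$ the $v$-independent functions; the comparison of those sections shows that $f_\ep^S$ and $g_\ep^S$ agree up to an error negligible \emph{even after multiplication by} $\eta_\ep$, so that $J_\ep^S$ may be computed from the Boltzmann current $\eta_\ep\int_{S_1}v\,g_\ep^S\,dv$ (they coincide in the limit). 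Since the data \eqref{def:fB} depend on $x$ only through $x_1$ and are even in $v_2$, and $\PP_\ep$ is invariant under $v_2\mapsto-v_2$, the solution inherits $g_\ep^S=g_\ep^S(x_1,v)$ and evenness in $v_2$. Two consequences reduce the $L^2$ claim \eqref{Jlim} to the distributional one \eqref{FickL}: integrating the equation in $v$ and using $\int_{S_1}\LL g\,dv=0$ gives a divergence-free Boltzmann current, whose horizontal component is constant in $x_1$; and the evenness in $v_2$ gives $(J_\ep^S)_2=\eta_\ep\int_{S_1}v_2\,f_\ep^S\,dv=0$. Thus $J_\ep^S$ equals a componentwise constant up to an $L^2$-negligible remainder, and for such a sequence convergence of the value in $\D'(0,L)$ forces convergence in $L^2(0,L)$.

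\textbf{A Green--Kubo moment identity.} Let $b=b(v)$ be the unique mean-zero solution of $\LL b=v_1$ on $S_1$ (it exists because $\int_{S_1}v_1\,dv=0$ and $\LL$ has a spectral gap on $(\ker\LL)^\perp$). Testing the stationary equation against $b$ and using self-adjointness of $\LL$,
\begin{align*}
(J_\ep^S)_1=\eta_\ep\int_{S_1}v_1\,g_\ep^S\,dv
&=\eta_\ep\int_{S_1}(\LL b)\,g_\ep^S\,dv
=\eta_\ep\int_{S_1}b\,\LL g_\ep^S\,dv\\
&=\int_{S_1}b\,(v\cdot\nabla_x g_\ep^S)\,dv
=\partial_{x_1}\!\int_{S_1}b\,v_1\,g_\ep^S\,dv .
\end{align*}
This exact identity is the crucial algebraic step: the diverging factor $\eta_\ep$ has been traded for one spatial derivative, so no singular constant survives.

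\textbf{Passing to the limit.} By Theorem~\ref{th:MAIN1} and the comparison above, $g_\ep^S\to\varrho^S$ in $L^2((0,L)\times S_1)$, with $\varrho^S$ independent of $v$. Since $b\,v_1\in L^2(S_1)$, Cauchy--Schwarz in $v$ gives
\begin{equation*}
\int_{S_1}b\,v_1\,g_\ep^S\,dv\ \longrightarrow\ \varrho^S\int_{S_1}b\,v_1\,dv\qquad\text{in }L^2(0,L),
\end{equation*}
hence, taking distributional $x_1$-derivatives in the moment identity,
\begin{equation*}
(J_\ep^S)_1\ \longrightarrow\ \Big(\int_{S_1}b\,v_1\,dv\Big)\,\partial_{x_1}\varrho^S\qquad\text{in }\D'(0,L).
\end{equation*}
By the Green--Kubo formula \eqref{GK} the coefficient $\int_{S_1}b\,v_1\,dv$ is, with the convention fixed there, the diffusion constant $D$, so that $(J_\ep^S)_1\to D\,\partial_{x_1}\varrho^S$; together with $D\,\nabla_x\varrho_\ep^S\to D\,\nabla_x\varrho^S$ in $\D'$ (from $\varrho_\ep^S\to\varrho^S$) this is exactly \eqref{FickL}. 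Finally, by the reduction of the first paragraph the limit of the constant $(J_\ep^S)_1$ holds in $L^2(0,L)$, and evaluating $\nabla\varrho^S$ on the linear profile \eqref{statheat} yields \eqref{Jlim} and the value \eqref{FICK}.

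\textbf{Main obstacle.} The algebra above is short; the substance lies in the two quantitative inputs it presupposes, both sharpened by the $\eta_\ep$-amplification built into \eqref{def:j}. First, the convergence $g_\ep^S\to\varrho^S$ must come with a rate, obtained from the Hilbert expansion for the \emph{stationary} Boltzmann solution analysed in $L^2$ (Section~\ref{sec4HILBERT}), whose remainder has to be controlled to order $o(\eta_\ep^{-1})$. Second, and more delicate, the mechanical correlation $f_\ep^S$ must be shown to coincide with $g_\ep^S$ up to an error still negligible after multiplication by $\eta_\ep$; this is where the set of ``bad'' trajectories destroying Markovianity must be estimated with one extra power of $\eta_\ep$ to spare, and it is precisely this margin that forces the stronger Assumption~\ref{A1}, $\ep^{1/2}\eta_\ep^6\to0$, one power beyond what Theorem~\ref{th:MAIN1} requires. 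Securing these $\eta_\ep$-weighted bounds, rather than the moment identity, is the technical heart of the proof.
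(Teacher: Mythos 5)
Your argument is correct and rests on the same two pillars as the paper's proof --- the transfer from the mechanical stationary state to the Boltzmann one via Proposition \ref{prop:fShS}, whose error $C\ep^{1/2}\eta_\ep^5$ must survive multiplication by $\eta_\ep$ and thus forces Assumption \ref{A1}, and the convergence $h_\ep^S\to\varrho^S$ together with $\nabla_x\varrho_\ep^S\to\nabla_x\varrho^S$ in $\D'$ from Theorem \ref{th:MAIN1} --- but the central computation is genuinely different. The paper substitutes the truncated stationary Hilbert expansion $h_\ep^S=\varrho^S+\eta_\ep^{-1}h^{(1)}+\eta_\ep^{-1}R_{\eta_\ep}$ into $\eta_\ep\int_{S_1}v\,(\cdot)\,dv$ and uses the quantitative remainder bound $\|R_{\eta_\ep}\|_2=O(\eta_\ep^{-1/2})$ of Section \ref{sec:HL2}, so the $h^{(1)}$ term produces the Green--Kubo coefficient directly. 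You instead test the stationary equation against $b=\LL^{-1}v_1$ and use self-adjointness of $\LL$ to get the exact identity $\eta_\ep\int_{S_1}v_1 h_\ep^S\,dv=\partial_{x_1}\int_{S_1}b\,v_1\,h_\ep^S\,dv$, which trades the diverging prefactor for a derivative; after that, only the un-rated $L^2$ convergence of Proposition \ref{prop:hSrhoS} is needed for the $\D'$ statement, so your ``Main obstacle'' paragraph is actually more pessimistic than your own argument requires. Your route also supplies a cleaner upgrade from $\D'$ to $L^2$ for \eqref{Jlim}: mass conservation $\int_{S_1}\LL g\,dv=0$ makes the Boltzmann current exactly constant in $x_1$, and a sequence of constants converging in $\D'(0,L)$ converges in $L^2(0,L)$; the paper instead reads the $L^2$ statement off the remainder estimate in \eqref{Fp}. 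One small caveat, shared with the paper rather than specific to you: with the literal normalization of \eqref{GK} and arc-length measure on $S_1$ one has $\int_{S_1}b\,v_1\,dv=\int_{S_1}v_1\LL^{-1}v_1\,dv=-2\pi D$ rather than $D$, so the identification of the coefficient with $D$ (and its sign) relies on the same implicit convention the authors use when passing from $h^{(1)}$ to $D\nabla_x\varrho^S$ in their own proof.
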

\vspace{4mm}
Observe that, as expected by physical arguments, the stationary flux $J^S$ does not depend on the space variable. Furthermore the diffusion coefficient $D$ is determined by the behavior of the system at equilibrium and in particular it is equal to the diffusion coefficient for the time dependent problem.

\begin{remark*}[The scaling]
We have formulated our result in macroscopic variables $x,t$.
Another point of view is to argue in terms of microscopic variables.

Let us set our problem in these variables denoted by $(q,t')$. This means that the radius of the disks is unitary while the strip, seen in micro-variables, is $(0,{\ep}^{-1}L)\times \R$.

To deal with a low density situation, we rescale the density as $\eta_{\ep}\ep\mu$, $\mu>0$ where $\eta_{\ep}$ is gently diverging. Note that in the usual Boltzmann-Grad limit $\eta_{\ep}=1.$ 
At times of order $\ep^{-1}$, one particle has an average number of collisions of order $\eta_{\ep}$. At larger times, namely of order $\eta_{\ep}\ep^{-1}$, we expect a diffusive behavior. Actually this emerges from the linear Boltzmann equation (see equation \eqref{eq:Bfree} and Proposition \ref{prop2} below) which is derived from the microscopic dynamics through the scaling $x=\ep q$ and $t=\ep\eta_{\ep}^{-1}t'$.
\end{remark*}

In this paper we consider a two dimensional case but our techniques apply in higher dimensions as well since in this case the pathological events are less likely. Moreover we consider the easier geometrical setting. However we believe that there are no serious obstructions to extend our results to more general geometries.

\section{Proofs}\label{proofs}
In this section we prove Theorems \ref{th:MAIN1} and \ref{th:MAIN2}, postponing the technical details to the next sections. In order to prove Theorem \ref{th:MAIN1} our strategy is the following. 
We introduce the stationary linear Boltzmann equation 
\begin{equation}\label{eq:Boltz2}
\left\{\begin{array}{ll}
\big(v\cdot\nabla_x\big)h_\ep^S(x,v)=\eta_\ep\, \LL h_\ep^S(x,v),
&\vspace{0.2cm}\\
h_\ep^S(x,v)=\rho_1,\ \ \ \ \ x\in \{0\}\times \R,\quad v_1>0, 
&\vspace{0.2cm}\\
h_\ep^S(x,v)=\rho_2,\ \ \ \ \ x\in \{L\}\times \R,\quad v_1<0,\  
&
\end{array}\right.
\end{equation}
where $\LL$ is the linear Boltzmann operator defined as 
\begin{equation}\label{def:L_ve}
 \LL f (v)=\mu\int_{-1}^1d\rho\big[f(v')-f(v)\big],\qquad  f\in L^1(S_1)
\end{equation}
with
\begin{equation}\label{scattering}
v'=v-2(n\cdot v)n
\end{equation}
and $n=n(\rho)$ the outward normal to the hard disk (see Figure 
\ref{fig:opK}). Here $\rho$ is the impact parameter, namely $\rho=\sin\alpha$ with $\alpha$ the angle of incidence.
\begin{figure}[ht]
\centering
\includegraphics[scale= 0.2]{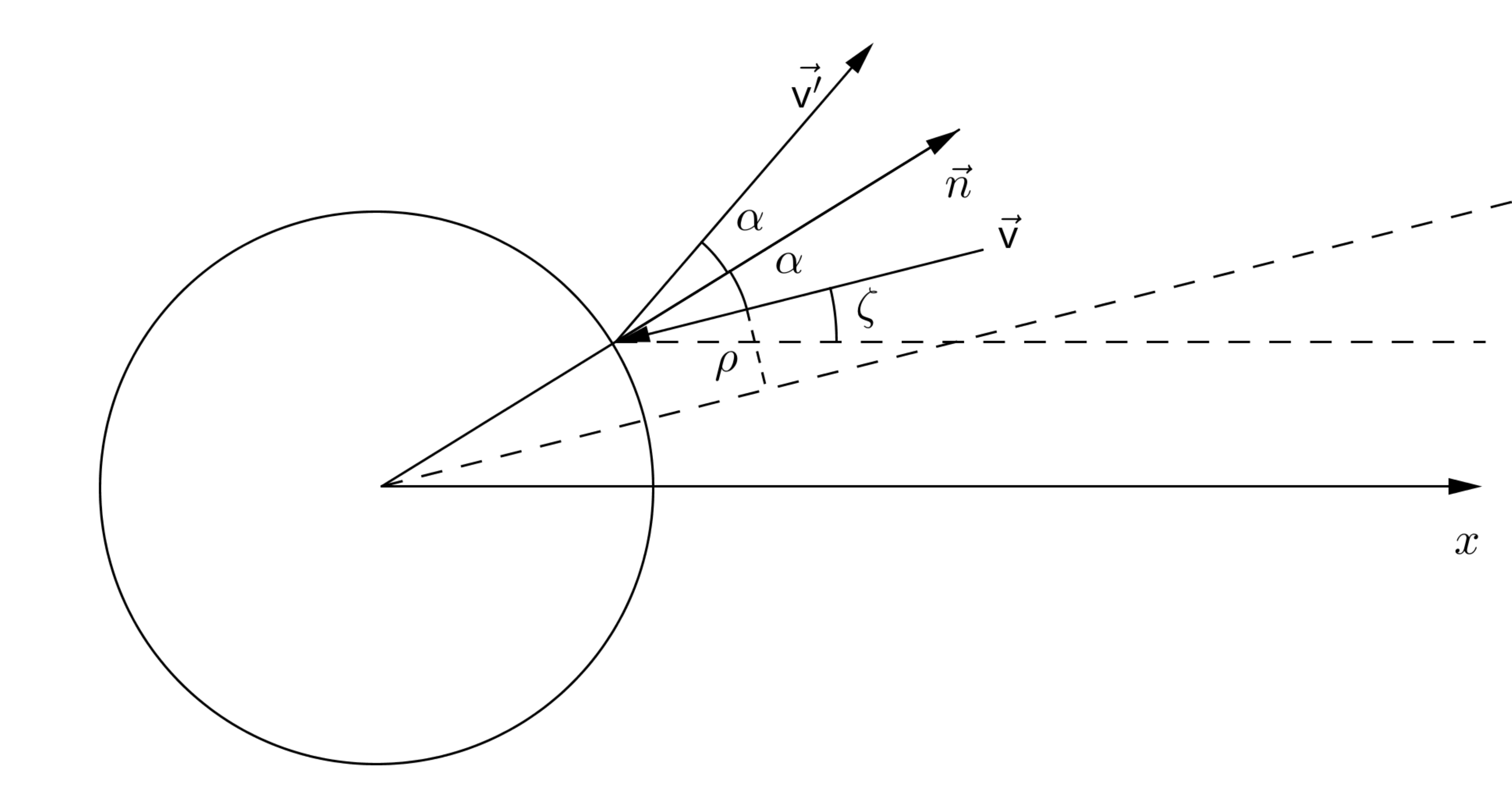}
\caption{Elastic reflection}
\label{fig:opK}
\end{figure}

Since the boundary conditions depend on the space variable only through the horizontal component, the stationary solution $h_\ep^S$ inherits the same feature, as well as $f_\ep^S$ and $\varrho^{S}$. 

The strategy of the proof consists of two steps. First we prove that there exists a unique $h_\ep^S$ which converges, as $\ep\to 0$, to $\varrho^S$ given by \eqref{statheat}. See Proposition \ref{prop:hSrhoS} below. Secondly we show that there exists a unique $f_{\ep}^S$ asymptotically equivalent to $h_\ep^S$. See Proposition \ref{prop:fShS} below. This result is achieved by showing that the memory effects of the mechanical system, preventing the Markovianity, are indeed negligible.

\vspace{2mm}
Let $h_\ep$ be the solution of the  
problem
\begin{equation}\label{eq:Boltz1}
\left\{\begin{array}{ll}
\big(\partial_t +v\cdot\nabla_x\big)h_\ep(x,v,t)=\eta_\ep\,  \LL h_\ep(x,v,t),
&\vspace{0.2cm}\\
h_\ep(x,v,0)=f_0(x,v), \ \ \ \ \ \ f_0\in L^{\infty}(\Lambda\times S_1),&\vspace{0.2cm}\\
h_\ep(x,v,t)=\rho_1,\ \ \ \ \ x\in \{0\}\times \R,\quad v_1>0,\ \ \ t\geq 0,\  
&\vspace{0.2cm}\\
h_\ep(x,v,t)=\rho_2,\ \ \ \ \ x\in \{L\}\times \R,\quad v_1<0,\ \ \ t\geq 0.\  
&
\end{array}\right.
\end{equation}

Then $h_\ep$ has the following explicit representation
\begin{equation}
\begin{split}
\label{eq:heps}
h_{\ep}(x,v,t)&= \sum_{N\geq 0} \left(\mu_\ep\ep\right)^{N}\int_{0}^{t}dt_1\dots\int_{0}^{t_{N-1}}dt_N\\&
\int_{-1}^{1}d\rho_1\dots\int_{-1}^{1}d\rho_N \, \chi(\tau<t_N)\chi(\tau>0)\, e^{-2\mu_\ep\ep \, (t-\tau)}\, f_{B}(\gamma^{-(t-\tau)}(x,v))+\\&
+\sum_{N\geq 0} e^{-2\mu_\ep\ep \, t}\left(\mu_\ep\ep\right)^{N}\int_{0}^{t}dt_1\dots\int_{0}^{t_{N-1}}dt_N\\&
\int_{-1}^{1}d\rho_1\dots\int_{-1}^{1}d\rho_N\, \chi(\tau=0)\, f_0(\gamma^{-t}(x,v)),
\end{split}
\end{equation}
with $f_B$ defined in \eqref{def:fB}. 
Given $x,v,\,t_1\dots t_N,\,\rho_1\dots\rho_N$, $\gamma^{-t}(x,v)$ denotes the trajectory whose position and velocity are
$$(x-v(t-t_1)-v_1(t_1-t_2)\dots-v_Nt_N,v_N).$$
The transitions $v\to v_1\to v_2\dots\to v_N$ are obtained by means of a scattering with an hard disk with impact parameter $\rho_i$ via \eqref{scattering}.
As before $t-\tau$, $\tau=\tau(x,v,t_1\dots,t_N,\rho_1\dots \rho_N)$, is the first (backward) hitting time with the boundary. We remind that $\mu_{\ep}\ep=\mu\eta_{\ep}.$ 

In formula \eqref{eq:heps} $h_{\ep}(t)$ results as the sum of two contributions, one due to the backward trajectories hitting the boundary and the other one due to the trajectories which never leave $\Lambda$.
Therefore we set
\begin{equation*}\label{def:hep}
h_{\ep}(x,v,t)=h_{\ep}^{out}(x,v,t)+h_{\ep}^{in}(x,v,t),
\end{equation*} 
where $h_{\ep}^{out}$ and $h_{\ep}^{in}$ are respectively the first and the second sum on the right hand side of \eqref{eq:heps}. Observe that
$h_{\ep}^{out}$ solves 
\begin{equation}\label{eq:Boltz}
\left\{\begin{array}{ll}
\big(\partial_t +v\cdot\nabla_x\big)h_\ep^{out}(x,v,t)=\eta_\ep\, \LL h_\ep^{out}(x,v,t),
&\vspace{0.2cm}\\
h_\ep^{out}(x,v,0)=0, \ \ \ \ \ \ x\in \Lambda,&\vspace{0.2cm}\\
h_\ep^{out}(x,v,t)=\rho_1,\ \ \ \ \ x\in \{0\}\times \R,\quad v_1>0,\ \ \ t\geq 0,\  
&\vspace{0.2cm}\\
h_\ep^{out}(x,v,t)=\rho_2,\ \ \ \ \ x\in \{L\}\times \R,\quad v_1<0,\ \ \ t\geq 0.\  
&
\end{array}\right.
\end{equation}
We denote by $S_\ep^0(t)$ the Markov semigroup associated to the second sum, namely 
\begin{equation*}\label{def:S0}
\begin{split}
(S_\ep^0(t)\ell)(x,v)&=\sum_{N\geq 0} e^{-2\mu_\ep\ep \, t}\left(\mu_\ep\ep\right)^{N}\int_{0}^{t}dt_1\dots\int_{0}^{t_{N-1}}dt_N\\&
\int_{-1}^{1}d\rho_1\dots\int_{-1}^{1}d\rho_N\, \chi(\tau=0)\,\ell(\gamma^{-t}(x,v)),
\end{split}
\end{equation*} 
with $\ell\in L^{\infty}(\Lambda\times S_1).$
In particular
$$
h_{\ep}^{in}(t)=S_\ep^0(t)f_0.
$$ 
We observe that $h_\ep^S$, solution of \eqref{eq:Boltz2}, satisfies, for $t_0>0$
\begin{equation*}
h_\ep^S=h_{\ep}^{out}(t_0)+S_\ep^0(t_0)h_\ep^S,
\end{equation*}
so that we can formally express $h_\ep^S$ as the Neumann series
\begin{equation}\label{eq:hSN}
h_\ep^S=\sum_{n\geq 0}(S_\ep^0(t_0))^n h_\ep^{out}(t_0).
\end{equation}

\begin{remark*}
Note that $h_\ep^S$ is a fixed point of the map $f_0\to h_\ep(t_0)$ solution to \eqref{eq:Boltz1}. Hence $h_\ep^S$ belongs to a periodic orbit, of period $t_0$, of the flow $f_0\to h_\ep(t)$. But this orbit consists of a single point because the Neumann series, being convergent, identifies a single element. This implies that $h_\ep^S$ is constant with respect to the flow \eqref{eq:Boltz1} and hence stationary.
\end{remark*}
We now establish existence and uniqueness of $h_\ep^S$ by showing 
that the Neumann series \eqref{eq:hSN} converges. In order to do it we need to extend the action of the semigroup $S_\ep^0(t)$ to the space $L^\infty(\R^2\times S_1)$, namely 
\begin{equation}
\begin{split}
\label{def:S01}
S_\ep^0(t)\ell_0(x,v)&=\chi_{\Lambda}(x)\sum_{N\geq 0} e^{-2\mu_\ep\ep \,t}\left(\mu_\ep\ep\right)^{N}\int_{0}^{t}dt_1\dots\int_{0}^{t_{N-1}}dt_N\\&
\int_{-1}^{1}d\rho_1\dots\int_{-1}^{1}d\rho_N\, \chi(\tau=0)\, \ell_0(\gamma^{-t}(x,v)),
\end{split}
\end{equation}
for any $\ell_0(x,v)\in L^\infty(\R^2\times S_1).$ Here $\chi_{\Lambda}$ is the characteristic function of $\Lambda$. 

\begin{proposition}\label{prop:existencehepS}
There exists $\ep_0>0$ 
such that for any $\ep<\ep_0$ and for any $\ell_0\in L^\infty(\R^2\times S_1)$ we have
\begin{equation}\label{S0bounded}
||S_\ep^0(\eta_{\ep})\ell_0 ||_\infty\leq \alpha\, ||\ell_0||_\infty,\ \ \ \alpha<1.
\end{equation}
As a consequence there exists a unique stationary solution $h_\ep^S\in L^\infty (\Lambda\times S_1)$ satisfying (\ref{eq:Boltz2}).
\end{proposition}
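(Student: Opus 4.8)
The plan is to read off the $L^\infty$ operator norm of $S_\ep^0(\eta_\ep)$ from the survival probability of the underlying random flight, and then to push that survival probability strictly below $1$, uniformly for small $\ep$, by comparison with the heat equation. Since the kernel in \eqref{def:S01} is nonnegative and the factor $\chi(\tau=0)$ only removes mass, we get the pointwise bound $|S_\ep^0(t)\ell_0(x,v)|\le \|\ell_0\|_\infty\, p_\ep(x,v,t)$, where $p_\ep:=S_\ep^0(t)\mathbf 1$. Here $p_\ep(x,v,t)$ is exactly the probability that the Markov random flight started at $(x,v)$ (unit speed, deflected at the jumps of a Poisson clock of rate $2\mu_\ep\ep=2\mu\eta_\ep$ according to \eqref{scattering}) never meets $\{x_1=0\}\cup\{x_1=L\}$ during $[0,t]$: dropping $\chi(\tau=0)$ would give the full mass-preserving semigroup applied to $\mathbf 1$, which equals $1$ (one checks $\sum_{N}e^{-2\mu_\ep\ep t}(\mu_\ep\ep)^N\frac{t^N}{N!}2^N=1$), so $1-p_\ep$ is the exit probability. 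Hence $\|S_\ep^0(\eta_\ep)\ell_0\|_\infty\le\big(\sup_{x\in\Lambda,\,v\in S_1}p_\ep(x,v,\eta_\ep)\big)\|\ell_0\|_\infty$, and it suffices to prove $\sup p_\ep(\cdot,\cdot,\eta_\ep)\le\alpha<1$ uniformly in $\ep<\ep_0$.

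The crux is bounding this survival probability. The function $w_\ep:=S_\ep^0(\cdot)\mathbf 1$ solves the linear Boltzmann equation with initial datum $\mathbf 1$ and homogeneous (absorbing) Dirichlet boundary data. On the time scale $t=\eta_\ep$ the collision term $\eta_\ep\LL$ is strong and drives an effective diffusion of coefficient $\sim D/\eta_\ep$, so that $\int_{S_1}w_\ep(x,v,\eta_\ep)\,dv$ converges, as $\ep\to 0$, to the value at a time of order one of the solution of $\pa_t U=D\,\pa_{x_1}^2U$ on $(0,L)$ with zero Dirichlet data and initial datum $\mathbf 1$, $D$ being the Green--Kubo coefficient. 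By the maximum principle this limit is strictly below $1$ in the interior and vanishes at the endpoints, so its supremum over $x_1\in(0,L)$ is some $\alpha_0<1$; moreover in the limit $w_\ep$ loses its $v$-dependence (local equilibrium). Controlling this convergence in $L^\infty(\Lambda\times S_1)$ with an $o(1)$ remainder uniform in $(x,v)$ then yields $\sup_{x,v}p_\ep(x,v,\eta_\ep)\le\alpha_0+o(1)\le\alpha<1$ for $\ep$ small. (As a sanity check: survival forces the net displacement $\int_0^{\eta_\ep}v_1\,ds$ to lie in $(-L,L)$, while by the central limit theorem for the velocity process this displacement has an $O(1)$, nondegenerate Gaussian limiting law, so the survival probability is asymptotically at most $\PP(|\mathcal N|<L)<1$.) I expect this quantitative, \emph{time-dependent} $L^\infty$ diffusion approximation to be the main obstacle: the Hilbert expansion of Section~\ref{sec4HILBERT} must be pushed to an order whose remainder is controlled in sup-norm uniformly up to the absorbing boundary, and it is precisely here that the smallness condition on $\eta_\ep$ (Assumption~\ref{A1}, in fact $\ep^{1/2}\eta_\ep^5\to 0$) is consumed.

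Granting \eqref{S0bounded}, existence and uniqueness follow at once. The affine map $f\mapsto h_\ep^{out}(\eta_\ep)+S_\ep^0(\eta_\ep)f$ is a contraction on $L^\infty(\Lambda\times S_1)$, with $h_\ep^{out}(\eta_\ep)$ bounded by $\max(\rho_1,\rho_2)$; hence the Neumann series \eqref{eq:hSN} with $t_0=\eta_\ep$ converges geometrically, $\|h_\ep^S\|_\infty\le\|h_\ep^{out}(\eta_\ep)\|_\infty/(1-\alpha)$, and identifies the unique fixed point. By the Remark following \eqref{eq:hSN} this fixed point is stationary for \eqref{eq:Boltz1}, so it solves \eqref{eq:Boltz2}; conversely, any two $L^\infty$ stationary solutions differ by some $g$ with $g=S_\ep^0(\eta_\ep)g$, whence $\|g\|_\infty\le\alpha\|g\|_\infty$ forces $g=0$. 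This yields both existence and uniqueness of $h_\ep^S\in L^\infty(\Lambda\times S_1)$.
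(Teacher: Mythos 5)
Your overall strategy (reduce to the survival probability $p_\ep=S_\ep^0(\eta_\ep)\mathbf 1$, show it is uniformly below $1$ by comparison with a heat equation, then run the contraction/Neumann-series argument) is the right one, and your final paragraph deducing existence and uniqueness from \eqref{S0bounded} is correct and essentially identical to the paper's. But the crux of your argument is left as an expectation rather than a proof, and the route you propose for it contains a genuine gap: you want to pass to the diffusive limit for the \emph{initial-boundary value problem} with absorbing Dirichlet data on $\pa\Lambda$, in $L^\infty$ uniformly up to the boundary. Nothing in the paper (and nothing routine) delivers this: Proposition \ref{prop2} is proved only in the whole space $\R^2$ with smooth initial data, and extending the Hilbert expansion to an absorbing boundary in sup-norm would require constructing kinetic boundary-layer (and, for the datum $\mathbf 1$ with zero boundary values, initial-layer) correctors — a substantial piece of analysis you neither carry out nor reduce to anything available. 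Your attribution of the condition $\ep^{1/2}\eta_\ep^5\to0$ to this step is also misplaced: that condition is consumed in comparing the mechanical dynamics with the Boltzmann semigroup (Propositions \ref{th:propCIN}--\ref{prop:fepinhepin}), not here; Proposition \ref{prop:existencehepS} concerns the Boltzmann semigroup alone.

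The paper sidesteps the boundary-value problem entirely by a relaxation you miss: since $\chi(\tau=0)$ implies in particular that the \emph{endpoint} $\gamma_v^{-t}(x)$ lies in $\Lambda$, one has the upper bound obtained by keeping only $\chi_{\Lambda}(\gamma_v^{-t}(x))$ and discarding the constraint on the rest of the path. After replacing $\chi_\Lambda$ by a smooth majorant $\chi_\Lambda^{\delta}$, the resulting series is exactly the \emph{whole-space} free Boltzmann semigroup applied to $\chi_\Lambda^{\delta}$, so the unrescaled-domain $L^\infty$ Hilbert expansion of Proposition \ref{prop2} applies directly, and the explicit Gaussian representation
\begin{equation*}
\varrho^{\delta}(x,1)=\int_{-\delta}^{L+\delta}\frac{dy_1}{\sqrt{4\pi D}}\,e^{-\frac{|x_1-y_1|^2}{4D}}<1
\end{equation*}
gives the strict contraction uniformly in $x$. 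If you adopt this relaxation, your argument closes without any boundary-layer analysis; as written, the key estimate $\sup_{x,v}p_\ep(x,v,\eta_\ep)\le\alpha<1$ is not established.
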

To prove Proposition \ref{prop:existencehepS} we have first to exploit the diffusive limit of the linear Boltzmann equation in a $L^{\infty}$ setting and in the whole space. 
We introduce $\h:\R^2\times S_1\times [0,T]\to \R^{+}$ the solution of the following rescaled linear Boltzmann equation
\begin{equation}\label{eq:Bfree}
\left\{
\begin{array}{l}\vspace{4mm}
\big(\partial_t  +\eta_\ep\,v\cdot\nabla_x\big)\h=\eta_\ep^2\,  \LL \h\\
\h(x,v,0)=\varrho_0(x),
\end{array}
\right.
\end{equation}
with $\varrho_0$ is a smooth function of the variable $x$ only (local equilibrium). 

We can prove
\begin{proposition}\label{prop2}
Let $\h$ be the solution of \eqref{eq:Bfree}, with an initial datum $\varrho_0\in C^\infty_0(\R^2)$.
Then, as $\ep\to 0$, $\h$ converges to the solution of the heat equation
\begin{equation}\label{eq:diff0}
\left\{
\begin{array}{l}\vspace{0.4cm}
\partial_t \varrho-D\Delta\varrho=0\\
\varrho(x,0)=\varrho_0(x),
\end{array}
\right.
\end{equation}
where $D$ is given by the Green-Kubo formula
\begin{equation}\label{GK}
D=\frac 1 {4\pi}\int_{S_1}dv\, v\cdot\big(-\LL\big)^{-1}v. 
\end{equation}
The convergence is in $L^{\infty}([0,T];L^{\infty}(\R^2\times S_1)).$
\end{proposition}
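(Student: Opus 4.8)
The plan is to prove Proposition~\ref{prop2} via the Hilbert expansion method, exploiting the spectral structure of the linear Boltzmann operator $\LL$ on $L^2(S_1)$ and the parabolic scaling built into \eqref{eq:Bfree}. First I would record the relevant properties of $\LL$: it is a bounded, self-adjoint, nonpositive operator on $L^2(S_1)$ whose kernel consists of the constants, with a spectral gap (the gap is finite and positive because the collision kernel $\int_{-1}^1 d\rho$ produces genuine velocity mixing on $S_1$). By symmetry $v\mapsto-v$ the velocity $v$ is orthogonal to the constants, so $v\in(\ker\LL)^\perp$ and the pseudo-inverse $(-\LL)^{-1}v$ appearing in \eqref{GK} is well defined; this makes $D>0$ and shows the Green--Kubo formula is meaningful.

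Next I would posit the ansatz
\begin{equation*}
\h(x,v,t)=\varrho(x,t)+\eta_\ep^{-1}\,h^{(1)}(x,v,t)+\eta_\ep^{-2}\,h^{(2)}(x,v,t)+R_\ep(x,v,t),
\end{equation*}
plug it into \eqref{eq:Bfree}, and match powers of $\eta_\ep$. The order-$\eta_\ep^2$ equation forces $\LL\varrho=0$, consistent with $\varrho$ being velocity-independent. The order-$\eta_\ep^1$ equation gives $v\cdot\nabla_x\varrho=\LL h^{(1)}$, solved by $h^{(1)}=(-\LL)^{-1}(-v\cdot\nabla_x\varrho)=-\big((-\LL)^{-1}v\big)\cdot\nabla_x\varrho$, which is solvable precisely because $v\cdot\nabla_x\varrho$ lies in the range of $\LL$. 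The order-$\eta_\ep^0$ equation reads $\partial_t\varrho+v\cdot\nabla_x h^{(1)}=\LL h^{(2)}$; projecting onto $\ker\LL$ (integrating in $v$ over $S_1$) kills the left-over term $\LL h^{(2)}$ and, upon inserting the formula for $h^{(1)}$, produces exactly $\partial_t\varrho-D\Delta\varrho=0$ with $D$ given by \eqref{GK}. Thus $\varrho$ solves the heat equation \eqref{eq:diff0} and $h^{(2)}$ is then determined as $(-\LL)^{-1}$ applied to the mean-zero remainder.

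The crux, and the step I expect to be the main obstacle, is controlling the remainder $R_\ep$ in $L^\infty([0,T];L^\infty(\R^2\times S_1))$ rather than in the more natural $L^2$ energy norm. The remainder satisfies an inhomogeneous version of \eqref{eq:Bfree} whose source is $\eta_\ep^{-2}$ times spatial derivatives of $h^{(2)}$ (hence of $\varrho$), together with an initial layer coming from the fact that the ansatz does not exactly match $\varrho_0$ at $t=0$. To close the estimate I would use the semigroup generated by the free-streaming-plus-collision operator, write $R_\ep$ via Duhamel, and bound it by $C\,\eta_\ep^{-1}$ using the regularity of $\varrho$ (which follows from $\varrho_0\in C^\infty_0$ and smoothing of the heat semigroup) to absorb all the derivatives of $\varrho$, $h^{(1)}$, $h^{(2)}$ that appear. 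The delicate point is that a purely $L^2$ decay estimate for the gap does not directly give an $L^\infty$ bound; I would therefore either invoke the explicit series representation of the Boltzmann semigroup as in \eqref{eq:heps} to propagate the sup-norm, or combine an $L^2$ hypocoercivity/gap estimate with the boundedness of $\LL$ and a maximum-principle-type argument for the transport part to transfer control to $L^\infty$. Since $\eta_\ep\to\infty$, the bound $\|R_\ep\|_\infty\le C\eta_\ep^{-1}\to0$ yields the claimed uniform convergence $\h\to\varrho$.
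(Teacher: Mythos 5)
Your proposal follows essentially the same route as the paper: a Hilbert expansion truncated at order $\eta_\ep^{-2}$, the solvability (Fredholm) conditions at each order producing the heat equation with the Green--Kubo coefficient \eqref{GK}, and control of the remainder in $L^\infty$ via Duhamel together with the explicit series (stochastic) representation of the transport--collision semigroup, which is the first of the two alternatives you mention and is exactly what the paper does. The one ingredient you leave unaddressed is that $h^{(1)}$, $h^{(2)}$ and the source term in the remainder equation must themselves be bounded in $L^\infty(\R^2\times S_1)$: since they are built from $\LL^{-1}$ applied to mean-zero functions, this requires the paper's Lemma \ref{lem:Linfty}, namely that $\LL^{-1}$ is bounded on mean-zero $L^\infty(S_1)$ functions, proved by showing the averaging operator $K$ in $\LL=2\mu(K-I)$ satisfies $\|Kg\|_\infty\le\beta\|g\|_\infty$ with $\beta<1$ on mean-zero $g$ — the $L^2$ spectral-gap facts you record do not by themselves yield this.
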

We postpone the proof of Proposition \eqref{prop2} to Section \ref{sec:HLinf}.
The proof relies on the Hilbert expansion and, to make it work, we need smoothness of the initial datum $\varrho_0$.

\begin{proof} [Proof of Proposition \ref{prop:existencehepS}] 
We can rewrite \eqref{def:S01} as
\begin{equation*}
\begin{split}
S_\ep^0(t)\ell_0(x,v)&=\chi_{\Lambda}(x)\sum_{N\geq 0} e^{-2\mu_\ep\ep \, t}\left(\mu_\ep\ep\right)^{N}\int_{0}^{t}dt_1\dots\int_{0}^{t_{N-1}}dt_N\\&
\int_{-1}^{1}d\rho_1\dots\int_{-1}^{1}d\rho_N\, \chi(\tau=0)\, \ell_0(\gamma^{-t}(x,v))\chi_{\Lambda}(\gamma_v^{-t}(x)),
\end{split}
\end{equation*}
where $\gamma_v^{-t}(x)=x-v(t-t_1)-v_1(t_1-t_2)\dots-v_Nt_N$ is the first component of $\gamma^{-t}(x,v)$. Note that the insertion of $\chi_{\Lambda}(\gamma_v^{-t}(x))$ is due to the constraint $\chi(\tau=0)$. 
Therefore
\begin{equation*}
\begin{split}
S_\ep^0(t)\ell_0&\leq ||\ell_0||_\infty \sum_{N\geq 0} e^{-2\mu_\ep\ep \, t}\left(\mu_\ep\ep\right)^{N}\int_{0}^{t}dt_1\dots\int_{0}^{t_{N-1}}dt_N\\&
\int_{-1}^{1}d\rho_1\dots\int_{-1}^{1}d\rho_N\, \chi_{\Lambda}(\gamma_v^{-t}(x)).
\end{split}
\end{equation*}
We denote by $\chi^{\delta}_{\Lambda}$ a mollified version of $\chi_{\Lambda}$, namely $\chi^{\delta}_{\Lambda}\in C_0^{\infty}(\R^2)$, $\chi^{\delta}_{\Lambda}(x)\leq 1$, $\chi^{\delta}_{\Lambda}\geq \chi_{\Lambda}$ and $\textit{supp}(\chi^{\delta}_{\Lambda})\subset (-\delta, L+\delta)\times \R$. Therefore
\begin{equation}\label{estS0}
\begin{split}
S_\ep^0(t)\ell_0&\leq ||\ell_0||_\infty \sum_{N\geq 0} e^{-2\mu_\ep\ep \, t}\left(\mu_\ep\ep\right)^{N}\int_{0}^{t}dt_1\dots\int_{0}^{t_{N-1}}dt_N\\&
\int_{-1}^{1}d\rho_1\dots\int_{-1}^{1}d\rho_N\, 
\chi^{\delta}_{\Lambda}(\gamma_v^{-t}(x)).
\end{split}
\end{equation}
The series on the right hand side of \eqref{estS0} defines a function $F$ which solves
\begin{equation*}
\left\{\begin{array}{ll}
(\partial_{t}+ v\cdot\nabla_x)
F(x,v,t)=\eta_{\ep} \LL 
F(x,v,t),& \vspace{3mm}\\
F(x,v,0)=\chi^{\delta}_{\Lambda}(x).
\end{array}\right.
\end{equation*}
Moreover, defining $G_\ep(x,v,t):=F(x,v,\eta_\ep t)$ then $G_\ep$ solves \eqref{eq:Bfree} with initial datum $\varrho_0=\chi^{\delta}_{\Lambda}$. 
By virtue of Proposition \ref{prop2}
$$
\|G_\ep(1)-\varrho^{\delta}(1)\|_{\infty}\leq \omega(\ep)
$$
where $\varrho^{\delta}(t)$ is the solution of \eqref{eq:diff0} with initial datum $\chi^{\delta}_{\Lambda}$. Here and in the sequel $\omega(\ep)$ denotes a 
positive function vanishing with $\ep$.
On the other hand
\begin{equation*}
\varrho^{\delta}(x,1)=\int_{\R^2} dy \ \frac{1}{4\pi D} \, e^{-\frac{|x-y|^2}{4D}}\, \chi^{\delta}_{\Lambda}(y)=\int_{-\delta}^{L+\delta} dy_1 \ \frac{1}{\sqrt{4\pi D}} \, e^{-\frac{|x_1-y_1|^2}{4D}}<
1.
\end{equation*}
Therefore for $\ep$ small enough
\begin{equation*}
\begin{split}
||S_\ep^0(\eta_{\ep})\ell_0||_\infty\leq &\, ||\ell_0||_\infty||S_\ep^0(\eta_{\ep})\chi^{\delta}_{\Lambda}||_\infty
\\\leq & \,||\ell_0||_\infty(\|G_\ep(1)-\varrho^{\delta}(1)\|_{\infty}+||\varrho^{\delta}(1)||_\infty)\\ 
\leq &\,||\ell_0||_\infty(\omega(\ep)+||\varrho^{\delta}(1)||_\infty)<\alpha||\ell_0||_\infty,\ \ \ \alpha<1.
\end{split}
\end{equation*}
We are using \eqref{estS0} for $t=\eta_{\ep}$.

Finally, since $\alpha<1$, by \eqref{eq:hSN} we get
\begin{equation*}
|| h_\ep^S||_\infty\leq \frac{1}{(1-\alpha)}\ ||h_\ep^{out}(\eta_{\ep})||_\infty\leq \frac{1}{(1-\alpha)}\, \rho_2.
\end{equation*}
\end{proof}
As we will discuss later on,
we find convenient to obtain the stationary solution $h_\ep^S$ via the Neumann series  \eqref{eq:hSN} rather than as the limit of $h_{\ep}(t)$ as $t\to\infty$. For further details see Remark \ref{rk:asymp}.

\begin{remark*}[$L^{\infty}$ vs. $L^2$]
The control of the Neumann series \eqref{eq:hSN} in a $L^{\infty}$ setting seems quite natural. This is provided by the bound \eqref{S0bounded}. It basically means that for a time $\eta_{\ep}$ the probability of a backward trajectory to fall out of $\Lambda$ is strictly positive. 
To prove rigorously this rather intuitive fact, we use Proposition \ref{prop2} and explicit properties of the solution of the heat equation. The price we pay is to develop an $L^{\infty}$ Hilbert expansion analysis (see Section \ref{sec:HLinf}) which is, however, interesting in itself.
On the other hand the use of the well known $L^2$ version of Proposition \ref{prop2} requires a $L^2$ control of the Neumann series which seems harder, weaker and less natural.
\end{remark*}
\vspace{2mm}

The last step is the proof of the convergence of $h_\ep^S$ to the stationary solution of the diffusion problem 
\begin{equation}
\label{HEAT1}
\left\{\begin{array}{ll}
\partial_{t}\varrho-D\Delta\varrho=0
& \vspace{2mm}\\
\varrho(x,t)=\rho_1,\ \ \ \ \ x\in \{0\}\times\R,\ \ \ t\geq 0\  
\vspace{0.2cm}\\
\varrho(x,t)=\rho_2,\ \ \ \ \ x\in \{L\}\times\R,\ \ \ t\geq 0,
\end{array}\right.
\end{equation}
with the diffusion coefficient $D$ given by the Green-Kubo formula \eqref{GK}.
We remind that the stationary solution $\varrho^S$ to the problem \eqref{HEAT1} has the following explicit expression 
\begin{equation}\label{rhoSexp}
\varrho^S(x)= \frac{\rho_1(L-x_1)+\rho_2 x_1}{ L},
\end{equation}
where $x=(x_1,x_2)$. 

By using again the Hilbert expansion technique (this time in $L^2$) we can prove 
\begin{proposition}\label{prop:hSrhoS}
Let $h_\ep^S\in L^{\infty}((0,L)\times S_1)$ be the solution to the problem \eqref{eq:Boltz2}. Then  
\begin{equation}\label{eq:hSrhoS}
h_{\ep}^S\to \varrho^S
\end{equation}
as $\ep\to 0$, where $\varrho^S(x)$ is given by \eqref{rhoSexp}. The convergence is in $L^2((0,L)\times S_1)$.
\end{proposition}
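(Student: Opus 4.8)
The plan is to run a stationary Hilbert expansion in $L^2$ and to close it through the dissipativity of $\LL$. Recall that, as an operator on $L^2(S_1)$, $\LL$ is bounded, self-adjoint and non-positive, with kernel equal to the functions constant in $v$ and range equal to the zero-mean functions; moreover it has a spectral gap: $-(\LL g,g)_{L^2(S_1)}\ge\sigma\,\|g-\langle g\rangle\|_{L^2(S_1)}^2$, where $\langle g\rangle=\frac1{2\pi}\int_{S_1}g\,dv$ and $\sigma>0$. Since the boundary data depend only on the horizontal variable, I work on $(0,L)\times S_1$, where $v\cdot\nabla_x=v_1\partial_{x_1}$.

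First I would construct the approximate solution $g_\ep=h^{(0)}+\eta_\ep^{-1}h^{(1)}$. Inserting this ansatz in \eqref{eq:Boltz2} and matching powers of $\eta_\ep^{-1}$ gives $\LL h^{(0)}=0$, so $h^{(0)}=h^{(0)}(x)$, and at the next order the solvability condition $\langle v\cdot\nabla_x h^{(0)}\rangle=0$ (automatic, as $\langle v\rangle=0$) together with $h^{(1)}=\LL^{-1}\big(v\cdot\nabla_x h^{(0)}\big)$. Imposing the Dirichlet values $\rho_1,\rho_2$ forces $h^{(0)}=\varrho^S$, the harmonic (here affine) profile \eqref{rhoSexp}. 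Since $\nabla_x h^{(0)}$ is then constant, $h^{(1)}=\frac{\rho_2-\rho_1}{L}\,\big(\LL^{-1}v\big)_1$ does not depend on $x$, whence $v\cdot\nabla_x h^{(1)}=0$ and a direct check shows that $g_\ep$ solves the interior equation $v\cdot\nabla_x g_\ep=\eta_\ep\LL g_\ep$ \emph{exactly} (so that no Knudsen-layer correction is needed); the only defect is at the boundary, where $g_\ep$ exceeds the prescribed incoming data by the $O(\eta_\ep^{-1})$ term $\eta_\ep^{-1}h^{(1)}$.

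The remainder $R_\ep:=h_\ep^S-g_\ep$ then solves the homogeneous problem $v\cdot\nabla_x R_\ep=\eta_\ep\LL R_\ep$ with incoming datum $R_\ep=-\eta_\ep^{-1}h^{(1)}$ on $\{x_1=0,\,v_1>0\}\cup\{x_1=L,\,v_1<0\}$. Multiplying by $R_\ep$ and integrating over $(0,L)\times S_1$, the transport term becomes the boundary flux $\frac12\int_{S_1}v_1\big[R_\ep^2(L,v)-R_\ep^2(0,v)\big]\,dv$, whose outgoing part is a favorable nonnegative contribution and whose incoming part is $O(\eta_\ep^{-2})$ by the known boundary data, while the collision term contributes $\eta_\ep(R_\ep,\LL R_\ep)\le-\eta_\ep\sigma\,\|R_\ep-\langle R_\ep\rangle\|^2$. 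Rearranging, this simultaneously yields smallness of the microscopic component, $\|R_\ep-\langle R_\ep\rangle\|_{L^2((0,L)\times S_1)}=O(\eta_\ep^{-3/2})$, and smallness of the outgoing boundary traces in $L^2(|v_1|\,dv)$.

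The remaining, and main, difficulty is the control of the hydrodynamic component $\langle R_\ep\rangle$. Testing the remainder equation against $1$ and against $v_1$ gives the moment identities
\begin{equation*}
\partial_{x_1}\!\int_{S_1}\!v_1 R_\ep\,dv=0,\qquad \partial_{x_1}\!\int_{S_1}\!v_1^2 R_\ep\,dv=-\eta_\ep\lambda_1\!\int_{S_1}\!v_1 R_\ep\,dv,
\end{equation*}
where $\LL v=-\lambda_1 v$ with $\lambda_1=(2D)^{-1}$ consistent with the Green-Kubo value \eqref{GK}. The first identity shows that the current $q_\ep:=\int_{S_1}v_1 R_\ep\,dv=\int_{S_1}v_1\big(R_\ep-\langle R_\ep\rangle\big)\,dv$ is constant in $x_1$ and $O(\eta_\ep^{-3/2})$; combined with the Chapman-Enskog closure $q_\ep=-2\pi D\,\eta_\ep^{-1}\partial_{x_1}\langle R_\ep\rangle$ up to lower-order terms (in which $(-\LL)^{-1}$ reproduces the diffusivity $D$), it follows that $\langle R_\ep\rangle$ solves, up to vanishing errors, the one-dimensional Dirichlet problem $-D\,\partial_{x_1}^2\langle R_\ep\rangle=o(1)$ on $(0,L)$ with $o(1)$ boundary data. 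The delicate point is making the boundary values of $\langle R_\ep\rangle$ rigorous: the energy estimate controls traces only in $L^2(|v_1|\,dv)$, which degenerates on the grazing set $v_1=0$, so passing to the unweighted average requires an extra argument. Once $\langle R_\ep\rangle\to0$ in $L^2(0,L)$ is secured, coercivity of the Dirichlet Laplacian (Poincaré on $(0,L)$) closes the estimate, and $h_\ep^S=g_\ep+R_\ep\to h^{(0)}=\varrho^S$ in $L^2((0,L)\times S_1)$.
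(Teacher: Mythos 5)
Your construction of the approximate solution is exactly the paper's: the same $h^{(0)}=\varrho^S$, the same $x$-independent $h^{(1)}=\frac{\rho_2-\rho_1}{L}\LL^{-1}(v_1)$, the observation that $g_\ep$ solves the interior equation exactly so that only an $O(\eta_\ep^{-1})$ boundary mismatch remains, and the same energy identity combining the boundary flux with the spectral gap of $\LL$. The computation giving $\|R_\ep-\langle R_\ep\rangle\|_{L^2}=O(\eta_\ep^{-3/2})$ is correct. But the proof is not complete: the convergence $h_\ep^S\to\varrho^S$ in $L^2((0,L)\times S_1)$ requires $\|R_\ep\|_{L^2}\to 0$, hence control of the hydrodynamic component $\langle R_\ep\rangle$, and this is precisely the step you leave conditional (``once $\langle R_\ep\rangle\to 0$ in $L^2(0,L)$ is secured\dots''). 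The spectral gap gives nothing on $\mathrm{Ker}\,\LL$, and the sketched route through the moment identities does not close: the Chapman--Enskog relation $q_\ep=-2\pi D\,\eta_\ep^{-1}\partial_{x_1}\langle R_\ep\rangle+\dots$ holds only ``up to lower-order terms'' that involve $\partial_{x_1}$ of the microscopic part of $R_\ep$, which the energy estimate does not control, and, as you yourself note, the boundary values of the unweighted average $\langle R_\ep\rangle$ cannot be read off from traces controlled only in $L^2(|v_1|\,dv)$ because of the grazing set $v_1=0$. So the main difficulty of the proposition is identified but not resolved.

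The paper closes this gap differently: it writes down an explicit representation of the (rescaled) remainder in terms of the semigroup $e^{s\LL}$ applied to the incoming datum $-h^{(1)}$, namely
\begin{equation*}
R_{\eta_\ep}(x_1,v)=-e^{\frac{\eta_\ep}{v_1}x_1\LL}\Bigl(\tfrac{\rho_2-\rho_1}{L}\Bigr)\LL^{-1}(v_1)\,\chi(v_1>0)-e^{-\frac{\eta_\ep}{v_1}(L-x_1)\LL}\Bigl(\tfrac{\rho_2-\rho_1}{L}\Bigr)\LL^{-1}(v_1)\,\chi(v_1<0),
\end{equation*}
for which the energy identity reduces to $\eta_\ep\,(R_{\eta_\ep},-\LL R_{\eta_\ep})=b_{\eta_\ep}$ with $b_{\eta_\ep}\ge 0$ an explicitly computable boundary term that is $O(1)$; the spectral gap is then applied directly to $\|R_{\eta_\ep}\|_2^2$ (not merely to the fluctuation part), yielding $\|R_{\eta_\ep}\|_2\le C\eta_\ep^{-1/2}$ with no separate treatment of $\langle R_{\eta_\ep}\rangle$. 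If you want to stay with your more structural energy-method formulation, you must either justify that the remainder has no hydrodynamic component (which is what the paper's explicit formula delivers), or supply an actual proof that $\langle R_\ep\rangle\to 0$ in $L^2(0,L)$; as written, the argument stops one step short of the conclusion.
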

The proof is postponed to Section \ref{sec:HL2}.

This concludes our analysis of the Markov part of the proof. 
\vspace{5mm}

Recalling the expression (\ref{def:fep}) for the one-particle correlation function $f_\ep$, we introduce a decomposition analogous to the one used for $h_{\ep}(t)$, namely
\begin{equation}\label{def:fepOUT}
f_{\ep}^{out}(x,v,t):=\EE_\varepsilon[f_B (T^{-(t-\tau)}_{\mbf{c}_{N}}(x,v))\chi(\tau>0)]
\end{equation} 
and
\begin{equation}\label{def:fepIN}
f_{\ep}^{in}(x,v,t):= \EE_\ep[f_0 (T^{-t}_{\mbf{c}_{N}}(x,v))\chi(\tau=0)],
\end{equation} 
so that 
\begin{equation*}\label{def:fep2}
f_{\ep}(x,v,t)=f_{\ep}^{out}(x,v,t)+f_{\ep}^{in}(x,v,t).
\end{equation*} 
Here $f_{\ep}^{out}$ is the contribution due to the trajectories that do leave $\Lambda$ at times smaller than $t$, while $f_{\ep}^{in}$ is the contribution due to the trajectories that stay internal to $\Lambda$. We introduce the flow $F_\ep^0(t)$ such that 
\begin{equation*}\label{def:F0}
(F_\ep^0(t)\ell)(x,v)=\EE_\ep[\ell (T^{-t}_{\mbf{c}_{N}}(x,v))\chi(\tau=0)], \quad \ell\in L^{\infty}(\Lambda\times S_1)
\end{equation*} 
and remark that $F_\ep^0$ is just the dynamics ''inside'' $\Lambda$. In particular $f_{\ep}^{in}(t)=F_\ep^0(t)f_0.$

To detect the stationary solution $f_\ep^S$ for the microscopic dynamics we proceed as for the Boltzmann evolution (see \eqref{def:ST}) by setting, for $t_0>0$,
\begin{equation*}
f_\ep^S=f_{\ep}^{out}(t_0)+F_\ep^0(t_0)f_\ep^{S}
\end{equation*}
and we can formally express the stationary solution as the Neumann series
\begin{equation}\label{eq:fSN}
f_\ep^S=\sum_{n\geq 0}(F_\ep^0(t_0))^n f_\ep^{out}(t_0).
\end{equation}
To show the convergence of the series \eqref{eq:fSN} and hence existence of 
$f_\ep^S$ we first need the following two Propositions.

\begin{proposition}\label{th:propCIN}
Let $T>0$. For any $t\in (0,T]$  
\begin{equation}\label{eq:CIN}
\|f_{\ep}^{out}(t)-h_{\ep}^{out}(t)\|_{L^\infty(\Lambda\times S_{1} )}\leq C\varepsilon^{\frac{1}{2}}\, \eta_\ep^3\, t^2,
\end{equation}
where $h_\ep^{out}$ solves (\ref{eq:Boltz}).
\end{proposition}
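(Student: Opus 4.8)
The plan is to quantify Gallavotti's comparison between the true Lorentz flow and its Markovian (Boltzmann) counterpart, now in the presence of the exit time $\tau$. First I would expand the Poisson expectation defining $f_\ep^{out}$ in \eqref{def:fepOUT} by conditioning on the obstacles actually hit by the backward trajectory $s\mapsto T^{-s}_{\mbf{c}_N}(x,v)$, $s\in[0,t-\tau]$, and integrating out all the other centers, which by the Poisson law \eqref{poisson} must avoid the tube swept by the trajectory. This produces a series indexed by the number $N$ of collisions, with integrations over ordered collision times $t>t_1>\dots>t_N$ and impact parameters $\rho_1,\dots,\rho_N$, weighted by $(\mu_\ep\ep)^N$ and by the empty-tube factor $\exp(-\mu_\ep\,|\text{tube}|)$. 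The point is that this puts $f_\ep^{out}$ into exactly the same parametric form as the explicit Boltzmann series $h_\ep^{out}$ (the first sum in \eqref{eq:heps}); the only discrepancies are that the mechanical trajectory is a genuine path in $\Lambda$, so it may \emph{recollide} with an already-visited obstacle, and that the true tube area differs from the idealized value $2\ep(t-\tau)$ precisely when the tube self-intersects.

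Second, I would split the collision histories into \emph{good} and \emph{bad} ones. A history is good when the backward mechanical trajectory has no recollision and the cylinders attached to successive collisions do not overlap. On good histories the mechanical and Boltzmann trajectories coincide up to the exit time $t-\tau$, the scattering Jacobians agree, and the empty-tube factor reduces to $\exp(-2\mu_\ep\ep(t-\tau))$; since both integrands then return the same value of $f_B$, they cancel in the difference $f_\ep^{out}-h_\ep^{out}$. Consequently the difference is supported on bad histories only, the exponential mismatch itself being generated solely by self-intersections and hence subsumed into the bad-event count.

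Third, I would estimate the measure of bad histories. The decisive geometric input bounds, for a single pair of scattering events, the set of impact parameters that force a recollision (or a cylinder overlap): in two dimensions this set is of order $\ep^{1/2}\eta_\ep$, where the $\ep^{1/2}$ reflects the tangency geometry of a disk of radius $\ep$ and the extra $\eta_\ep$ accounts for the shortness $\sim\eta_\ep^{-1}$ of the free flights over which a return can occur. Choosing the offending pair among the $N$ collisions costs $\binom{N}{2}$, while the remaining $N-2$ integrations stay free; summing against the Poisson weights $(\mu_\ep\ep)^N=(\mu\eta_\ep)^N$, whose normalization supplies the damping $e^{-2\mu\eta_\ep t}$, gives $\sum_{N}e^{-2\mu\eta_\ep t}\frac{(2\mu\eta_\ep t)^N}{N!}\binom{N}{2}=\frac{(2\mu\eta_\ep t)^2}{2}$, i.e. the expected number of colliding pairs $\sim t^2\eta_\ep^2$. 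Multiplying by the per-pair geometric smallness $\ep^{1/2}\eta_\ep$ and using $f_B\le\rho_2$ yields the asserted bound $C\ep^{1/2}\eta_\ep^3 t^2$, uniformly in $(x,v)$ since all estimates are on parameter measures only.

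The hard part will be the recollision estimate itself. Because $\eta_\ep\to\infty$, a trajectory of macroscopic length $t$ experiences $O(\eta_\ep t)$ collisions and hence $O(\eta_\ep^2 t^2)$ candidate pairs, so one must prove that \emph{every} recolliding or overlapping pair — including non-consecutive returns and configurations truncated by the boundary through $\tau$ — is suppressed by the uniform factor $\ep^{1/2}\eta_\ep$, and that this smallness survives against the growing number of collisions. Verifying that Assumption \ref{A1} (in fact the weaker $\ep^{1/2}\eta_\ep^5\to0$ suffices here) keeps the whole product infinitesimal is the genuinely delicate step; the remainder is the standard, if laborious, bookkeeping of the Gallavotti expansion.
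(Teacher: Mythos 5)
Your overall architecture coincides with the paper's: expand the Poisson expectation, integrate out the non-influential scatterers to produce the empty-tube factor, pass to the collision-time/impact-parameter variables \eqref{change var}, observe that this map is one-to-one and the mechanical and Markovian series cancel outside the pathological (recolliding or interfering) histories, and absorb the mismatch between $e^{-\mu_\ep|\T_t(\mbf{b}_N)|}$ and $e^{-2\mu_\ep\ep t}$ into the same bad-event count via the normalization identity. Your final bookkeeping --- $O(\eta_\ep^2 t^2)$ candidate pairs against the Poisson weights, times a per-pair suppression $\ep^{1/2}\eta_\ep$ --- reproduces exactly the outcome of Lemma \ref{th:prop}. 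One structural remark: the paper first symmetrizes the two representations by an extension argument (Lemma \ref{equivalence1} adds fictitious jumps to $h_\ep^{out}$ after the exit time, and fictitious scatterers are added outside $\Lambda$ on the mechanical side), which forces the additional bad event $\chi_{\partial\Lambda}$ of \eqref{def:chiLambda} for obstacles centered in $[-\ep,0]\times\R\cup[L,L+\ep]\times\R$ that could alter $\tau$; your direct comparison with the form \eqref{formula5} could in principle avoid this, but you would then have to verify the normalization identity with the tube truncated by $\partial\Lambda$, and you do not address either route explicitly.

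The genuine gap is the per-pair estimate itself, which you assert (``of order $\ep^{1/2}\eta_\ep$, where the $\ep^{1/2}$ reflects the tangency geometry of a disk of radius $\ep$'') and then explicitly defer as the genuinely delicate step. That estimate is the entire analytic content of the proposition, and the mechanism is not tangency: for each bad pair $(i,j)$ one introduces a threshold $\ep^{\gamma}$ and splits into two regimes. For interferences, if the angle between $v_i^{+}$ and $v_{j-1}^{+}$ exceeds $\ep^{\gamma}$, the integral over $t_j$ is confined to an interval of measure $C\ep^{1-\gamma}$ (which, replacing a free $t$-integration, costs an extra factor $\eta_\ep$ and yields $\ep^{1-\gamma}\eta_\ep^3 t^2$ as in \eqref{Markov error int1}); if the angle is below $\ep^{\gamma}$, the $(j-1)$-th scattering angle is confined to a set of measure $\ep^{\gamma}$, yielding $\ep^{\gamma}\eta_\ep^2 t^2$ as in \eqref{Markov error int2}; optimizing at $\gamma=1/2$ gives $\ep^{1/2}\eta_\ep^3 t^2$. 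Recollisions are treated symmetrically with the threshold placed on $|b_i-b_{j-1}|$, and the boundary event is estimated like an interference. Without this dichotomy, or an equivalent quantitative geometric lemma with a uniform constant over all (including non-consecutive) pairs, the claimed per-pair smallness is unsupported; since the rest of your argument is standard bookkeeping, the proposal as written does not yet constitute a proof of \eqref{eq:CIN}.
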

\begin{proposition}\label{prop:fepinhepin}
For every $\ell_0\in L^\infty(\Lambda\times S_1)$  
 \begin{equation}\label{eq:fepinhepin}
||\left(F_\ep^0(t)-S_\ep^0(t)\right)\ell_0 ||_\infty\leq C||\ell_0||_\infty\,\ep^{\frac 1 2}\eta_\ep^3t^2,\quad \forall t\in[0,T].
\end{equation}
\end{proposition}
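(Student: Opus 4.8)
The plan is to expand both flows as series indexed by the number $N$ of collisions experienced by the backward trajectory and to compare them term by term. For $S_\ep^0(t)$ the expansion is the one displayed in \eqref{def:S01}. For the mechanical flow $F_\ep^0(t)\ell_0$ I would carry out the classical Gallavotti expansion of the Poisson expectation $\EE_\ep$: only the obstacles lying within distance $\ep$ of the trajectory $T^{-s}_{\mbf{c}_{N}}(x,v)$, $s\in[0,t]$, affect the motion, so one orders these effective scatterers by their backward hitting times $t_1>\dots>t_N$ and integrates over impact parameters $\rho_1,\dots,\rho_N$, while the remaining obstacles must avoid the tube $\mathcal{T}_N$ of radius $\ep$ swept by the particle. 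By the Poisson law \eqref{poisson} this empty-tube constraint produces the weight $e^{-\mu_\ep|\mathcal{T}_N|}$. The outcome is a series of exactly the same shape as \eqref{def:S01}, differing from it only in that the idealized Boltzmann trajectory $\gamma^{-t}(x,v)$ is replaced by the true mechanical trajectory $T^{-t}_{\mbf{c}_N}(x,v)$, and the idealized weight $e^{-2\mu_\ep\ep t}$ is replaced by $e^{-\mu_\ep|\mathcal{T}_N|}$.

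Next I would isolate the set of \emph{good} configurations $\mathcal{G}_N$ of times and impact parameters for which (i) the cylinders composing $\mathcal{T}_N$ overlap only between consecutive collisions, and (ii) the backward mechanical trajectory never re-enters the $\ep$-disk of a previously visited scatterer. On $\mathcal{G}_N$ the two trajectories coincide, $\gamma^{-t}(x,v)=T^{-t}_{\mbf{c}_N}(x,v)$, so the integrands agree up to the tube-weight discrepancy. Since $|\mathcal{T}_N|$ differs from $2\ep t$ only through the overlaps of consecutive cylinders, each of area $O(\ep^2)$, one has $\big|\,|\mathcal{T}_N|-2\ep t\,\big|\leq C\ep^2 N$, whence
\begin{equation*}
\big|e^{-\mu_\ep|\mathcal{T}_N|}-e^{-2\mu_\ep\ep t}\big|\leq C\mu_\ep\ep^2 N\, e^{-2\mu_\ep\ep t}=C\mu\eta_\ep\ep\, N\, e^{-2\mu_\ep\ep t}.
\end{equation*}
Resumming against the Poisson weights (using $\sum_N N\,(2\mu_\ep\ep t)^N/N!\sim \mu_\ep\ep t=\mu\eta_\ep t$) shows that this tube contribution is of order $\ep\,\eta_\ep^2 t$, hence comfortably smaller than the claimed bound.

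The core of the argument is the estimate of the \emph{bad} set $\mathcal{B}_N=\mathcal{G}_N^{\,c}$, consisting of recollisions and interferences (overlaps of non-consecutive cylinders, or a cylinder meeting an already-visited disk). For a fixed pair of collision labels, the requirement that the trajectory return within distance $\ep$ of an earlier scatterer confines the intervening impact parameter and free-flight time to a set whose measure carries a factor $\ep^{1/2}$, originating from the grazing geometry at the curved boundary of the $\ep$-disk. Summing over the $O((\mu_\ep\ep\, t)^2)=O(\eta_\ep^2 t^2)$ pairs of collisions, weighting by the per-collision factors $\mu_\ep\ep=\mu\eta_\ep$, and resumming the residual Poisson series exactly as in the convergence analysis behind \eqref{eq:hSN}, produces the power $\ep^{1/2}\eta_\ep^3 t^2$, the extra factor of $\eta_\ep$ beyond the naive pair count coming from the collision weight attached to the recolliding scatterer. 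The constraint $\chi(\tau=0)$ enters only through the characteristic function $\chi_\Lambda$ evaluated along the trajectory; on $\mathcal{G}_N$ it is identical for the two flows since the trajectories coincide, and on $\mathcal{B}_N$ it is simply bounded by $1$, so it introduces no new difficulty.

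The main obstacle is precisely this geometric estimate of $|\mathcal{B}_N|$ together with the extraction of the correct powers of $\ep$ and $\eta_\ep$ while summing over the unbounded collision number. Since the right-hand side of \eqref{eq:fepinhepin} is identical to that of \eqref{eq:CIN}, I would prove Proposition \ref{th:propCIN} and the present one by a single recollision/interference analysis; the only differences are the boundary datum $f_B$ versus the generic datum $\ell_0$ and the presence of the first-exit time $\tau$, neither of which affects the bad-event counting.
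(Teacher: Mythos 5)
Your proposal follows essentially the same route as the paper: the paper proves this proposition by invoking verbatim the Gallavotti expansion and the recollision/interference analysis of Proposition \ref{th:propCIN} and Lemma \ref{th:prop} (noting it is even easier here since no boundary extension is needed), which is exactly your good/bad configuration decomposition with the $\ep^{1/2}\eta_\ep^3 t^2$ bad-set bound obtained by optimizing over the transverse/grazing dichotomy. The only cosmetic difference is that you control the tube-weight discrepancy $\bigl|e^{-\mu_\ep|\mathcal{T}_N|}-e^{-2\mu_\ep\ep t}\bigr|$ by a direct Taylor estimate, whereas the paper folds it into the same pathological-set bound via the monotone chain $f^{out}_\ep\geq\breve f^{out}_\ep\geq\tilde f^{out}_\ep\geq\bar f^{out}_\ep$ together with the Poisson normalization identity; both work.
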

The proof of the above two Propositions is postponed to Section \ref{sec5}.
As a corollary we can prove 
\begin{proposition}\label{prop:fShS}
For $\ep$ sufficiently small there exists a unique stationary solution $f_\ep^S\in L^\infty(\Lambda\times S_1)$ satisfying \eqref{def:ST}. 
Moreover 
\begin{equation}\label{eq:fShS}
\|h_\ep^S-f_\ep^S\|_\infty\leq C\ep^{\frac 1 2}\eta_\ep^5.
\end{equation}
\end{proposition}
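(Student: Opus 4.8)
The plan is to run the same Neumann-series argument used for $h_\ep^S$ in Proposition \ref{prop:existencehepS}, transferring contractivity from the Markov semigroup $S_\ep^0$ to the true microscopic evolution $F_\ep^0$ by means of the kinetic comparison in Proposition \ref{prop:fepinhepin}. Fix the time step $t_0=\eta_\ep$ and abbreviate $S:=S_\ep^0(\eta_\ep)$, $F:=F_\ep^0(\eta_\ep)$, $h:=h_\ep^{out}(\eta_\ep)$, $f:=f_\ep^{out}(\eta_\ep)$. Proposition \ref{prop:existencehepS} gives $||S\ell_0||_\infty\le\alpha||\ell_0||_\infty$ with $\alpha<1$, while Proposition \ref{prop:fepinhepin} at $t=\eta_\ep$ yields $||(F-S)\ell_0||_\infty\le C\ep^{\frac12}\eta_\ep^5||\ell_0||_\infty$. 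Since $\ep^{\frac12}\eta_\ep^5\to 0$ (a consequence of Assumption \ref{A1}), for $\ep$ small the operator $F$ is again a contraction, $||F\ell_0||_\infty\le\alpha'||\ell_0||_\infty$ with $\alpha':=\alpha+C\ep^{\frac12}\eta_\ep^5<1$. Contractivity makes the Neumann series \eqref{eq:fSN} converge absolutely in $L^\infty(\Lambda\times S_1)$ and identifies $f_\ep^S=(I-F)^{-1}f$ as the unique fixed point of $\ell\mapsto f+F\ell$; this gives the asserted existence and uniqueness, and stationarity follows exactly as in the Remark after \eqref{eq:hSN}.

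For the quantitative bound I would compare the two series term by term. Writing $h_\ep^S-f_\ep^S=\sum_{n\ge0}(S^n h-F^n f)$ and inserting a mixed term,
\[
S^n h-F^n f=S^n(h-f)+(S^n-F^n)f ,
\]
splits the difference into a part that sees only the mismatch of the boundary (out) data and a part that sees only the mismatch of the two evolutions. The first part is controlled by Proposition \ref{th:propCIN} at $t=\eta_\ep$, which gives $||h-f||_\infty\le C\ep^{\frac12}\eta_\ep^5$; using $||S^n\ell_0||_\infty\le\alpha^n||\ell_0||_\infty$ and summing the geometric series in $n$ produces a contribution bounded by $\frac{C}{1-\alpha}\ep^{\frac12}\eta_\ep^5$.

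For the second part I would use the telescoping identity
\[
S^n-F^n=\sum_{k=0}^{n-1}S^k\,(S-F)\,F^{n-1-k},
\]
and estimate each summand by the operator bounds $\alpha^k$ and $(\alpha')^{n-1-k}$, together with the one-step difference $||(S-F)\,\cdot\,||_\infty\le C\ep^{\frac12}\eta_\ep^5||\cdot||_\infty$ and $||f||_\infty\le\rho_2$. The resulting double sum $\sum_{n\ge1}\sum_{k=0}^{n-1}\alpha^k(\alpha')^{n-1-k}$ factorizes into two convergent geometric series equal to $\frac{1}{(1-\alpha)(1-\alpha')}$, so this part is bounded by $\frac{C\rho_2}{(1-\alpha)(1-\alpha')}\ep^{\frac12}\eta_\ep^5$. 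Adding the two contributions yields $||h_\ep^S-f_\ep^S||_\infty\le C'\ep^{\frac12}\eta_\ep^5$, which is \eqref{eq:fShS}.

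The main obstacle is the first step rather than the bookkeeping: we have no direct contraction estimate for the genuine mechanical flow $F_\ep^0$, and its contractivity can only be borrowed from the Boltzmann semigroup $S_\ep^0$ through Proposition \ref{prop:fepinhepin}. This forces the perturbation $C\ep^{\frac12}\eta_\ep^5$ to stay below the spectral gap $1-\alpha$ of $S$, which is precisely where the divergence rate of $\eta_\ep$ must be tied to $\ep$; once $\alpha'<1$ is secured the rest is a telescoping of two geometric series and is routine. A minor technical point to verify is that, although \eqref{S0bounded} is stated on $L^\infty(\R^2\times S_1)$, the constraint $\chi(\tau=0)$ keeps the relevant backward trajectories inside $\Lambda$, so the estimate applies verbatim to the inputs $h,f\in L^\infty(\Lambda\times S_1)$.
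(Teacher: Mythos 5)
Your proposal is correct and follows essentially the same route as the paper: contractivity of $F_\ep^0(\eta_\ep)$ is borrowed from $S_\ep^0(\eta_\ep)$ via Proposition \ref{prop:fepinhepin}, and the bound \eqref{eq:fShS} comes from comparing the two Neumann series with a telescoping identity for $S^n-F^n$ and summing geometric series; your splitting $S^nh-F^nf=S^n(h-f)+(S^n-F^n)f$ is just the mirror image of the paper's $F^n(f-h)+(F^n-S^n)h$ and works equally well. No gaps.
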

\begin{proof}
We prove the existence and uniqueness of the stationary solution by showing that the Neumann series \eqref{eq:fSN} converges, namely
\begin{equation}\label{F0bounded}
||F_\ep^0(\eta_{\ep})f_0 ||_\infty\leq \alpha'\, ||f_0||_\infty,\ \ \ \alpha'<1.
\end{equation}
This implies
\begin{equation*}
|| f_\ep^S||_\infty\leq \frac{1}{(1-\alpha')}\ ||f_{\ep}^{out}(\eta_{\ep})||_\infty\leq \frac{1}{(1-\alpha')}\, \rho_2,\ \ \ \ \alpha'<1.
\end{equation*}
In fact, since 
\begin{equation*}
||F_\ep^0(\eta_{\ep})f_0 ||_\infty\leq ||\left(F_\ep^0(\eta_{\ep})-S_\ep^0(\eta_{\ep})\right)f_0 ||_\infty+||S_\ep^0(\eta_{\ep})f_0 ||_\infty,
\end{equation*}
thanks to Propositions \ref{prop:existencehepS} and \ref{prop:fepinhepin} we get 
\begin{equation}\label{F0bounded}
\begin{split}
||F_\ep^0(\eta_{\ep})f_0 ||_\infty\leq &\,||f_0||_\infty C\ep^{\frac 1 2}\eta_\ep^5 +  ||S_\ep^0(\eta_{\ep})f_0 ||_\infty\\ \leq &\, (C\ep^{\frac 1 2}\eta_\ep^5+\alpha)||f_0||_\infty
\leq \alpha'||f_0||_\infty, 
\end{split}
\end{equation}
with $\alpha'<1$, for $\ep$ sufficiently small (remind that $ \ep^{\frac 1 2}\eta_{\ep}^5\to 0$ as $\ep\to 0$).
This guarantees the existence and uniqueness of the microscopic stationary solution $f_{\ep}^S$.

In order to prove $\eqref{eq:fShS}$ we compare the two Neumann series representing $f_{\ep}^S$ and $h_{\ep}^S$,
\begin{equation}\label{tbyt}
\begin{split}
\|f^S_{\ep}-h^S_{\ep}\|_{\infty}=&
\,\|\sum_{n\geq 0}\big((F_\ep^0(\eta_{\ep}))^n f_{\ep}^{out}(\eta_{\ep}) -(S_\ep^0(\eta_{\ep}))^n h_{\ep}^{out}(\eta_{\ep})\big)\|_{\infty}\\&
\leq \sum_{n\geq 0}\|(F_\ep^0(\eta_{\ep}))^n(f_{\ep}^{out}(\eta_{\ep})-h_{\ep}^{out}(\eta_{\ep}))\|_{\infty}\\&+
\sum_{n\geq 0}\|\big((F_\ep^0(\eta_{\ep}))^n  -(S_\ep^0(\eta_{\ep}))^n\big) h_{\ep}^{out}(\eta_{\ep})\|_{\infty}.
\end{split}
\end{equation}
By \eqref{F0bounded}, using Proposition \ref{th:propCIN}, the first sum on the right hand side of \eqref{tbyt} is bounded by
$$\frac{1}{1-\alpha'}\|f_{\ep}^{out}(\eta_{\ep})-h_{\ep}^{out}(\eta_{\ep})\|_{\infty}\leq C\ep^{\frac 1 2}\eta_{\ep}^5.$$
As regard to the second sum on the right hand side of \eqref{tbyt} we have
\begin{equation*}
\begin{split}
&\sum_{n\geq 0}\|\big((F_\ep^0(\eta_{\ep}))^n  -(S_\ep^0(\eta_{\ep}))^n\big) h_{\ep}^{out}(\eta_{\ep})\|_{\infty}\\&
\leq \sum_{n\geq 0}\sum_{k=0}^{n-1}\|(F_\ep^0(\eta_{\ep}))^{n-k-1} \big(F_\ep^0(\eta_{\ep})-S_\ep^0(\eta_{\ep})\big) (S_\ep^0(\eta_{\ep}))^k h_{\ep}^{out}(\eta_{\ep})\|_{\infty}\\&
\leq \sum_{k,\ell\geq 0}\|(F_\ep^0(\eta_{\ep}))^{\ell} \big(F_\ep^0(\eta_{\ep})-S_\ep^0(\eta_{\ep})\big) (S_\ep^0(\eta_{\ep}))^k h_{\ep}^{out}(\eta_{\ep})\|_{\infty}\\&
\leq C\,\|h_{\ep}^{out}(\eta_{\ep})\|_{\infty}\ep^{\frac 1 2}\eta_{\ep}^5,
 \end{split}
 \end{equation*}
by virtue of \eqref{S0bounded}, \eqref{F0bounded} and \eqref{eq:fepinhepin}. This concludes the proof of Proposition \ref{prop:fShS}.
 
\end{proof}
At this point the proof of Theorem \ref{th:MAIN1} follows from Propositions 
\ref{prop:hSrhoS} and \ref{prop:fShS}.

\begin{remark}\label{rk:asymp}
One could try to characterize $h_\ep^S$ and $f_\ep^S$ in terms of the long (macroscopic) time asymptotics of $h_{\ep}(t)$ and $f_{\ep}(t)$. The trick of expressing both stationary states by means of Neumann series avoids the problem of controlling the convergence rates, as $t\to\infty$, with respect to the scale parameter $\ep$.
\end{remark}

We conclude by proving Theorem \ref{th:MAIN2} which actually is a Corollary of the previous analysis.
\begin{proof} [Proof of Theorem \ref{th:MAIN2}]
By standard computations (see e.g. Section \ref{sec:HL2}) we have
$$
h_\ep^S=\varrho^S+\frac{1}{\eta_{\ep}}h^{(1)}+\frac{1}{\eta_{\ep}}R_{\eta_{\ep}},
$$
where 
$$
h^{(1)}(v)=\LL^{-1}(v\cdot\nabla_x\varrho^S)=\frac{\rho_2-\rho_1}{L}\LL^{-1}(v_1)
$$
and, as we shall see in Section \ref{sec:HL2}, $R_{\eta_{\ep}}=O(\frac{1}{\sqrt{\eta_{\ep}}})$ in $L^{2}((0,L)\times S_1).$
Therefore, since $\int_{S_1} v\varrho^S dv=0$, 
\begin{equation}\label{Fp}
\eta_{\ep}\int_{S_1} v h_\ep^S(x,v)dv=-D\nabla_x\varrho^S+O(\frac{1}{\sqrt{\eta_{\ep}}}),
\end{equation} 
where $D$ is given by \eqref{GK}. By Theorem \ref{th:MAIN1} the right hand side of \eqref{Fp} is close to $D\nabla_x\varrho_{\ep}^S$ in $\D'((0,L)\times S_1)$, where $\varrho_{\ep}^S$ is given by \eqref{def:mass}. On the other hand, by Proposition \ref{prop:fShS} and Assumption \ref{A1}, the left hand side of \eqref{Fp} is close in $L^{\infty}((0,L)\times S_1)$ to $J_{\ep}^S(x)$ defined in \eqref{def:j}. This concludes the proof of \eqref{FickL}. Moreover \eqref{Jlim} and \eqref{FICK} follow by \eqref{Fp}.
\end{proof}

\section{The Hilbert expansions}\label{sec4HILBERT}

\subsection{Proof of Proposition \ref{prop2}}\label{sec:HLinf}
Let $\h:\R^2\times S_1\times [0,T]\to \R^{+}$ be the solution of the problem \eqref{eq:Bfree} that we recall here for the reader's convenience
\begin{equation}\label{eq:Bfree2}
\left\{
\begin{array}{l}\vspace{4mm}
\big(\partial_t  +\eta_\ep\,v\cdot\nabla_x\big)\h=\eta_\ep^2\,  \LL \h\\
\h(x,v,0)=\varrho_0(x),
\end{array}
\right.
\end{equation}
where $\varrho_0$ is a smooth function of the variable $x$ only.  
We will prove that $\h$ converges to the solution of the heat equation
by using the Hilbert expansion technique (see e.g. \cite{EP} and \cite{CIP}), namely we assume that $\h$ has the following form
$$
\h(x,v,t)=h^{(0)}(x,t)+\sum_{k=1}^{+\infty} \left(\frac{1}{\eta_\ep}\right)^k\, h^{(k)}(x,v,t),
$$
where the coefficients $h^{(k)}$ are independent of $\eta_\ep$. The well known idea is to determine them recursively, by imposing that $\h$ is a solution of \eqref{eq:Bfree2}. 
Comparing terms of the same order we get
\begin{equation*}
\begin{split}
& v\cdot\nabla_x h^{(0)}=\LL \, 
 h^{(1)}\\&
\partial_t\, h^{(k)}+v\cdot\nabla_x h^{(k+1)}=\LL \, 
 h^{(k+2)},\ \ \ \ \ \ k\geq 0.
 \end{split}
\end{equation*}
We require $h^{(0)}$ to satisfy the same initial condition as the whole solution $\h$, namely
\begin{equation*}
h^{(0)}(x,0)=\varrho_0(x).
\end{equation*}
First we will show that each coefficient $h^{(k)}(t)\in L^\infty(\R^2\times S_1)$.
We discuss in detail the cases $k=0,1,2$. The same procedure can be iterated for any $k$. The determination of the other coefficients $h^{(k)}$ is standard and we do not discuss it further.
Then we will show that, in the truncated expansion
at order $\eta_\ep^{-2}$, namely
\begin{equation}\label{truncf}
\h(x,v,t)=h^{(0)}(x,t)+\frac 1 {\eta_\ep}h^{(1)}(x,v,t)+\frac 1 {\eta_\ep^2}h^{(2)}(x,v,t)
+\frac 1 {\eta_\ep}R_{\eta_\ep}(x,v,t),
\end{equation}
the remainder $R_{\eta_\ep}$ is uniformly bounded in $L^\infty$. Therefore $\h$ converges to $h^{(0)}$ in $L^\infty$ for $\eta_\ep\to\infty$. 

In order to prove that $h^{(k)}(t)\in L^\infty(\R^2\times S_1)$ we need the following Lemma.
\begin{lemma}\label{lem:Linfty}
Let $\LL $ be the linear Boltzmann operator defined in \eqref{def:L_ve}.
Then for any $g\in L^\infty(S_1)$ such that $\displaystyle \int_{S_1} dv\, g(v)=0$
\begin{equation}\label{LinversoBOUNDEDinLinfty}
|| \LL ^{-1}g||_\infty\leq C||g||_\infty, 
\end{equation}
with $C>0$.
\end{lemma}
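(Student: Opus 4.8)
The plan is to exploit the special structure of the hard-disk collision operator, namely that the total scattering rate is constant. Writing the loss term explicitly, $\mu\int_{-1}^1 d\rho\, f(v) = 2\mu f(v)$, so that
$$\LL = K - 2\mu\, I, \qquad (Kf)(v) = \mu\int_{-1}^1 d\rho\, f(v'),$$
and $\LL$ is a bounded scalar shift of the gain operator $K$. By the rotational invariance of the disk geometry the deflection $v\mapsto v'$ depends on $v$ and $\rho$ only through the scattering angle, so after the change of variables $\rho\mapsto\chi$ (with $\chi$ the angle between $v$ and $v'$) the operator $K$ becomes a convolution on $S_1\simeq\R/2\pi\Z$ against a bounded, nonnegative kernel $k(\chi)$ with $\int_{S_1}k = 2\mu$. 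The first thing I would record is therefore that $K$ is a convolution operator with bounded kernel, hence Hilbert--Schmidt on $L^2(S_1)$ and bounded from $L^2(S_1)$ into $L^\infty(S_1)$.

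With this reformulation the inverse is first constructed in $L^2$. The operator $\LL$ is self-adjoint and nonpositive on $L^2(S_1)$, and since $K$ is a compact (Hilbert--Schmidt) perturbation of $-2\mu I$ its spectrum is discrete; diagonalizing the convolution in the Fourier basis $\{e^{im\chi}\}$ gives eigenvalues $\lambda_0=0$ (the constants, $m=0$) and $\lambda_m = \hat k(m)-2\mu$ with $|\lambda_m|\ge \nu_0>0$ for $m\neq 0$, i.e. a spectral gap. Hence the kernel of $\LL$ is exactly the constants, so any $g$ with $\int_{S_1}g = 0$ lies in the range, and the equation $\LL f = g$ has a unique mean-zero solution $f = \LL^{-1}g\in L^2(S_1)$ obeying
$$\|f\|_{2}\le \tfrac{1}{\nu_0}\|g\|_{2}\le \tfrac{\sqrt{2\pi}}{\nu_0}\|g\|_{\infty}.$$

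The $L^2\to L^\infty$ upgrade is then immediate by a bootstrap on the fixed-point identity. From $\LL f = g$ one has $f = \frac{1}{2\mu}(Kf - g)$, and because $K$ sends $L^2$ into $L^\infty$ with $\|Kf\|_\infty\le C\|f\|_2$ (Cauchy--Schwarz against the bounded kernel), it follows that
$$\|f\|_{\infty}\le \frac{1}{2\mu}\big(\|Kf\|_{\infty}+\|g\|_{\infty}\big)\le \frac{1}{2\mu}\big(C\|f\|_2 + \|g\|_\infty\big)\le C'\|g\|_{\infty},$$
which is the claimed bound \eqref{LinversoBOUNDEDinLinfty}. I expect the only genuinely delicate points to be the two structural facts feeding the argument: that the collision frequency is constant (so that $\LL$ really is $K$ minus a scalar, without which the smoothing identity $f = \frac{1}{2\mu}(Kf-g)$ would carry a $v$-dependent weight), and that the gain kernel $k(\chi)$ is bounded and mean-preserving so that $K$ smooths $L^2$ into $L^\infty$; both follow from an explicit computation of the hard-disk differential cross-section. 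A self-contained alternative, bypassing the Fourier spectral gap, would be to view $A := \frac{1}{2\mu}K$ as the transition operator of a rotation-invariant Markov chain on $S_1$, verify a Doeblin minorization (the transition density is positive off a single point), deduce geometric contraction of the oscillation of mean-zero functions, and sum the Neumann series $\LL^{-1} = -\frac{1}{2\mu}\sum_{n\ge 0}A^n$ directly in $L^\infty$.
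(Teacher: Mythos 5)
Your proof is correct, but it takes a genuinely different route from the paper's. The paper argues entirely in $L^\infty$: writing $\LL=2\mu(K-I)$, it \emph{defines} $\LL^{-1}g$ by the Neumann series $-\frac{1}{2\mu}\sum_{n\ge 0}K^ng$ and proves convergence by showing that $K$ is a strict $L^\infty$-contraction on mean-zero functions. The key step there is to pass to the angular variable, use $\int_{S_1}g=0$ to replace the kernel weight $\cos\alpha$ by $\cos\alpha-1$, and read off the explicit constant $\beta=\frac12\int_{-\pi/2}^{\pi/2}(1-\cos\alpha)\,d\alpha=\frac{\pi-2}{2}<1$. You instead solve $\LL f=g$ first in $L^2$ by Fourier-diagonalizing the convolution operator $K$ (whose kernel $k(\theta)=\frac12|\sin(\theta/2)|$ is indeed bounded) and invoking a spectral gap, and then upgrade to $L^\infty$ through the smoothing bound $\|Kf\|_\infty\le C\|f\|_2$ and the fixed-point identity $f=\frac{1}{2\mu}(Kf-g)$; both structural inputs you flag (constant collision frequency, bounded angular kernel) are correct for hard disks. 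What each approach buys: the paper's argument is self-contained, gives an explicit constant, and uses no spectral theory; yours is more robust and generalizes to any collision operator with constant loss term, bounded gain kernel, and an $L^2$ spectral gap, at the price of a constant depending on $\nu_0$ and of having to justify the strict inequality $|\hat k(m)|<2\mu$ for $m\neq 0$ (which you assert rather than prove, though it follows since $k\ge 0$ is not supported on a coset). Your sketched Doeblin alternative --- contracting the oscillation of mean-zero functions and summing the Neumann series in $L^\infty$ --- is in fact the variant closest in spirit to what the paper actually does.
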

\begin{proof}
We want to solve the equation
$\displaystyle
\LL h=g,
$
with $\displaystyle \int_{S_1} dv\, g(v)=0$. The operator $\LL $ can be written as $\LL =2\mu(K- I)$, where  
$$(Kf)(v):=
\frac{
1
}{2}
\int_{-1}^{1} d\rho\, f(v')
$$ 
is self-adjoint in $L^2(S_1)$.
Therefore
$$
h=-\frac{g}{2\mu} +Kh
$$
and, by iterating,
$$
h=-\frac{g}{2\mu} -\frac{Kg}{2\mu}-\dots -\frac{K^ng}{2\mu}+K^{n+1}h,\quad \forall\,n\geq 0.
$$
Then $\LL ^{-1}$ can be formally defined through the Neumann series
$$
h=\LL ^{-1}g:=-\frac{1}{2\mu}\sum_{n=0}^\infty K^n g.
$$
In order to prove that the series converges we need to show that
\begin{eqnarray}\label{Kbounded}
&& || Kg||_\infty\leq \beta\, ||g||_\infty,\ \ \ \beta<1,\\
\label{Kpreserva}
&& \int_{S_1} dv\, (Kg)(v)=0,
\end{eqnarray}
for any $g\in L^{\infty}(S_1)$
such that $\displaystyle \int_{S_1} dv\, g(v)=0$.
Indeed \eqref{Kbounded} and \eqref{Kpreserva} imply 
$$
|| \LL ^{-1}g||_\infty\leq \frac{1}{2\mu(1-\beta)}\ ||g||_\infty.
$$

The self-adjointness of $K$ and the fact that $K\text{1}=1$ imply (\ref{Kpreserva}).

We focus on the proof of (\ref{Kbounded}). For any given $v$, fix a reference system in such a way that $v=(-\cos\zeta, -\sin \zeta)$, with $ \zeta\in [-\pi, \pi)$ (see Figure \ref{fig:opK}). Then for every bounded function $g$ with zero average we have
\begin{equation*}
\begin{split}
(Kg)(v)&=
\frac{
1
}{2}\int_{-\frac{\pi}{2}}^{\frac{\pi}{2}} d\alpha\ \frac{d\rho}{d\alpha}\ g(\cos(\zeta +2\alpha), \sin(\zeta+2\alpha))\\&
=\frac{
1
}{2}\int_{-\frac{\pi}{2}}^{\frac{\pi}{2}} d\alpha\ \cos\alpha\  g(\cos(\zeta+2\alpha), \sin(\zeta+2\alpha)),
\end{split}
\end{equation*}
where we used that $\rho=\sin\alpha$.
Observe that for any $ \gamma\in [-\pi, \pi)$
\begin{eqnarray*}
\int_{-\frac{\pi}{2}}^{\frac{\pi}{2}} d\alpha\   g(\cos(\gamma+2\alpha), \sin(\gamma+2\alpha))=\frac{1}{2}\int_{-\pi}^{\pi} d\alpha\   g(\cos\alpha, \sin\alpha)=0.
\end{eqnarray*}
Then we can write
\begin{equation*}
(Kg)(v)=\frac{
1
}{2}\int_{-\frac{\pi}{2}}^{\frac{\pi}{2}} d\alpha\ (\cos\alpha-1)\  g(\cos(\zeta+2\alpha), \sin(\zeta+2\alpha)),
\end{equation*}
which implies
$$
|(Kg)(v)|\leq ||g||_\infty \ \frac{
1
}{2} \int_{-\frac{\pi}{2}}^{\frac{\pi}{2}} d\alpha\ (1-\cos\alpha)=
\beta\ ||g||_\infty,\ \  \ \ \beta<1.
$$
\end{proof}

Next we consider the first two equations arising from the Hilbert expansion, namely \vspace{1.5mm}

\begin{itemize} \vspace{1mm}
\item[(i)]  $\,v\cdot\nabla_x h^{(0)}=\LL \, h^{(1)}$,  \vspace{1mm}
\item[(ii)]  $\,\partial_t h^{(0)}+v\cdot\nabla_x h^{(1)}=\LL \, 
 h^{(2)}$.
 \end{itemize}
 \vspace{1.5mm}
We remind that the linear Boltzmann operator $\LL $ on $L^2(S_1)$ is selfadjoint and has the form $\LL =2\mu(K- I)$ where $K$ is a compact operator.
Therefore, by the Fredholm alternative, equation (i) has a solution if and only if the left hand side belongs to $(Ker \LL )^{\bot}$. Since the null space of $\LL $ is constituted by the constant functions, it follows that
$$
(Ker \LL )^{\bot}=\{g\in L^2(S_1):\ \int_{S_1} g(v)\, dv=0\}
$$
and, in order to solve equation (i), we have to show that $v\cdot\nabla_x h^{(0)}\in (Ker \LL )^{\bot}$. 
This follows by the fact that $v\cdot \nabla_x h^{(0)}$ is an odd function of $v$.
Then we can invert the operator $\LL $ and set 
\begin{equation}\label{h1f}
h^{(1)}(x,v,t)=\LL ^{-1}(v\cdot\nabla_x h^{(0)}(x,t)) + \xi^{(1)}(x,t),\ \ \ \ \ \ \ \ \ \ \ 
\end{equation}
where $\xi^{(1)}(x,t)$ belongs to the kernel of the operator $\LL $. On the other hand, since $\LL ^{-1}$ preserves the parity (see e.g. \cite{EP}), $\LL ^{-1}(v\cdot\nabla_x h^{(0)})$ is an odd function of the velocity.

We integrate equation (ii) with respect to the uniform measure on $S_1$. 
Since
$\int_{S_{1}} dv\,\LL \, h^{(2)}=0$, using equation (\ref{h1f}),
we obtain
\begin{equation*}
\partial_t \, 
h^{(0)}+\, 
\frac {1}{2\pi}\int_{S_{1}}dv\, v\cdot \nabla_x\big(
\LL ^{-1}v\cdot\nabla_x h^{(0)}\big)=0.\ \ \ \ \ \ \ \ \ \
\end{equation*}
Notice that the term $\xi^{(1)}(x,t)$ gives no contribution since $\int_{S_{1}}dv\, v\cdot \nabla_x \xi^{(1)}(x,t)=0$. We define the $2\times 2$ matrix  
$D_{ij}=\frac {1}{2\pi}\int_{S_{1}} dv\, v_i (-\LL )^{-1}v_j$
and we observe that $D_{ij}=0$ for $i\neq j$ and
$D_{11}=D_{22}=D>0$, where
$$
D=
\frac {1}{4\pi}\int_{S_{1}} dv\,v\cdot\big(-\LL   \big)^{-1}v. 
$$
Therefore $h^{(0)}$ satisfies the heat equation
\begin{equation}\label{eq:h0f}
\left\{
\begin{array}{l}\vspace{4mm}
\partial_t \, 
 h^{(0)}-D\Delta_x h^{(0)}=0,\\
 h^{(0)}(x,0)=\varrho_0(x).
 \end{array}\right.
\end{equation}
In particular $h^{(0)}(t)\in L^{\infty}(\R^2\times S_1)$ for any $t\geq 0$. 

Let us consider equation (ii). By integrating with respect 
to the uniform measure on $S_1$ the left hand side vanishes, due to equation \eqref{eq:h0f}. Therefore we can invert the operator $\LL $ to obtain
\begin{eqnarray}\label{h2f}
h^{(2)}(x,v,t)&= & \LL ^{-1}\big(\partial_t h^{(0)}(x,t) +v\cdot \nabla_x(\LL ^{-1}(v\cdot\nabla_x)h^{(0)}(x,t))  \big)+\nonumber\\
&+&\LL ^{-1}\big( v\cdot \nabla_x\xi^{(1)}(x,t)\big)+\xi^{(2)}(x,t),
\end{eqnarray}
where $\xi^{(2)}(x,t)$ belongs to the kernel of the operator $\LL $. 

The equation for $h^{(1)}$ reads
\begin{equation}\label{h3f}
\partial_t \, 
h^{(1)}+v\cdot \nabla_x h^{(2)}(x,v,t)=\LL h^{(3)}.
\end{equation}
Therefore, integrating with respect to the uniform measure on $S_1$, using \eqref{h2f}, we get the following closed equation for $\xi^{(1)}(x,t)$
\begin{equation}\label{rof}
\partial_t \, 
 \xi^{(1)}-D\Delta_x \xi^{(1)}=0.
\end{equation}
Since there are no restrictions on the initial condition, we make the simplest choice
$\xi^{(1)}(x,0)=0.$
Therefore
$\xi^{(1)}(x,t)=0$ for any $t\geq 0$ and
hence we have the following expression for $h^{(1)}$ 
\begin{equation*}
h^{(1)}(x,v,t)=\LL ^{-1}(v\cdot\nabla_x h^{(0)}(x,t)).
\end{equation*}
Thanks to Lemma \ref{lem:Linfty} and the smoothness of $h^{(0)}$ we have
\begin{equation*}
\sup_{t\in [0,T]}
\|h^{(1)}( t)\|_\infty\leq \frac{1}{2\mu(1-\beta)} \sup_{t\in [0,T]}
\|\nabla_xh^{(0)}(t)\|_\infty <+\infty.
\end{equation*}

The expression for the second order coefficient $h^{(2)}$ now reads 
\begin{equation*}
\begin{split}
h^{(2)}(x,v,t)=h^{(2)}_{\bot}(x,v,t)+\xi^{(2)}(x,t),
\end{split}
\end{equation*}
where we set
\begin{equation*}
h^{(2)}_{\bot}(x,v,t)= \LL ^{-1}\big(\partial_t h^{(0)}(x,t) +v\cdot \nabla_x(\LL ^{-1}(v\cdot\nabla_x)h^{(0)}(x,t))  \big).
\end{equation*}
We observe that, since $h^{(0)}$ solves the heat equation \eqref{eq:h0f}, using Lemma \ref{lem:Linfty} it follows that $h^{(2)}_{\bot}\in L^\infty\big([0,T];L^\infty(\R^2\times S_1)\big)$. Moreover any spatial derivative of $h^{(2)}_{\bot}$ belongs to $L^\infty\big([0,T];L^\infty(\R^2\times S_1)\big)$ as well.

By using (\ref{rof}), the left hand side of (\ref{h3f}) belongs to $(Ker \LL )^{\bot}$. Therefore we can invert operator $\LL $ obtaining
\begin{equation*}\label{h3BB}
\begin{split}
h^{(3)}(x,v,t)&=\LL ^{-1}\big(\partial_t \, 
h^{(1)}+v\cdot \nabla_x h^{(2)}(x,v,t)\big) + \xi^{(3)}(x,t),\\&
=\LL ^{-1}\big(\partial_t \, 
\LL ^{-1}(v\cdot \nabla_x h^{(0)}(x,t))+v\cdot \nabla_x h^{(2)}(x,v,t)\big) + \xi^{(3)}(x,t),
\end{split}
\end{equation*}
where $\xi^{(3)}(x,t)\in \text{Ker}\,  \LL $.
The equation for $h^{(2)}$ reads
\begin{equation*}
\partial_t \, 
h^{(2)}+v\cdot \nabla_x h^{(3)}=\LL h^{(4)}.
\end{equation*}
Integrating with respect to the uniform measure on $S_1$ and using the above expressions for $h^{(3)}$ and $h^{(2)}$ we find the following equation
for $\xi^{(2)}(x,t)$
\begin{equation}\label{ro2}
\partial_t \, 
\xi^{(2)}+D\Delta_x\xi^{(2)} = S(x,t),
\end{equation}
where
\begin{eqnarray*}
S(x,t)&=&-\frac{1}{2\pi}\int_{S_1}dv\ v\cdot \nabla_x \, \LL ^{-1}\big(\partial_t \, 
\LL ^{-1}(v\cdot \nabla_x h^{(0)}(x,t))\big)\\
&&-
\frac{1}{2\pi}\int_{S_1}dv\ v\cdot \nabla_x\, \LL ^{-1}\left(v\cdot \nabla_x\, h^{(2)}_{\bot}(x,v,t)\right).
\end{eqnarray*}
We notice that $S\in L^\infty\big([0,T];L^\infty(\R^2)\big)$. As before we make the assumption $\xi^{(2)}(x,0)=0$, then $\xi^{(2)}\in L^\infty\big([0,T];L^\infty(\R^2)\big)$ and its spatial derivatives as well.

Now we consider the truncated expression \eqref{truncf}. 
The first three coefficients are uniformly bounded.
The remainder $R_{\eta_\ep}$ satisfies
\begin{equation}\label{resto}
 \,\big(\partial_t  +\eta_\ep\,v\cdot\nabla_x\big)R_{\eta_\ep}=\eta_\ep^2 \LL \,  R_{\eta_\ep}
-A_{\eta_\ep},
\end{equation}
with initial condition
\begin{equation*}
R_{\eta_\ep}(x,v,0)=:\bar{R}_{\eta_\ep}(x,v)=-h^{(1)}(x,v,0)-\frac{1}{\eta_\ep}\, h^{(2)}(x,v,0).
\end{equation*}
Here $A_{\eta_\ep}=\partial_t h^{(1)}+\frac 1 {\eta_\ep} \partial_t\, h^{(2)}
+v\cdot\nabla_x h^{(2)}$, then $A_{\eta_\ep}\in L^\infty\big([0,T];L^\infty(\R^2\times S_1)\big)$. Note that the smoothness hypothesis on $\varrho_0$ ensures that $\bar{R}_{\eta_\ep}\in L^{\infty}$.

We denote by $S_{\eta_\ep}(t)$ the semigroup associated to the  generator 
$-\eta_\ep\big(v\cdot \nabla_x - \eta_\ep\LL \big)$. By equation (\ref{resto})
we get
$$
R_{\eta_\ep}(t)=S_{\eta_\ep}(t)R_{\eta_\ep}(0)+\int_0^t ds\ S_{\eta_\ep}(t-s)\, A_{\eta_\ep}(s).
$$
By the usual series expansion for $S_{\eta_\ep}(t)$ we obtain
\begin{equation*}
\begin{split}
R_{\eta_\ep}(x,v,t)&=\sum_{N\geq 0} e^{-2\mu {\eta_\ep}^2 t}\left(\mu{\eta_\ep}\right)^{N}\int_{0}^{{\eta_\ep} t}dt_1\dots\int_{0}^{t_{N-1}}dt_N\\&
\int_{-1}^{1}d\rho_1\dots\int_{-1}^{1}d\rho_N\, \bar R_{\eta_\ep}(\gamma^{-{\eta_\ep} t}(x,v))+\\&
+\int_0^t ds\sum_{N\geq 0} e^{-2\mu{\eta_\ep}^2 (t-s)}\left(\mu{\eta_\ep}\right)^{N}\int_{0}^{{\eta_\ep}(t-s)}dt_1\dots\int_{0}^{t_{N-1}}dt_N\\&
\int_{-1}^{1}d\rho_1\dots\int_{-1}^{1}d\rho_N\, A_{\eta_\ep}(\gamma^{-{\eta_\ep}(t-s)}(x,v),s).
\end{split}
\end{equation*}
Therefore
\begin{equation*}
\sup_{t\in [0,T]}\|R_{\eta_\ep}( t)\|_\infty\leq \|\bar R_{\eta_\ep}\|_\infty+T \sup_{t\in [0,T]}\|A_{\eta_\ep}( t)\|_\infty\leq C <+\infty.
\end{equation*}

$\hfill\Box$

\subsection{Proof of Proposition \ref{prop:hSrhoS}}\label{sec:HL2}

The proof  makes use of the Hilbert expansion in $L^2$ (see e.g. \cite{EP} and \cite{CIP}).
Indeed we follow  the same strategy of the previous subsection.
Let $h_\ep^S$ be the solution of the following equation
\begin{equation*}
\left\{\begin{array}{ll}
v_1\partial_{x_1}h_\ep^{S}(x_1,v)=\eta_\ep\, \LL h_\ep^{S}(x_1,v),
&\vspace{0.2cm}\\
h_\ep^{S}(x_1,v)=\rho_1,\ \ \ \ \ x_1=0, \quad v_1>0,
&\vspace{0.2cm}\\
h_\ep^{S}(x_1,v)=\rho_2,\ \ \ \ \ x_1=L,\quad v_1< 0.
&
\end{array}\right.
\end{equation*}
We assume that $h_\ep^S$ has the following form
$$
h_\ep^S(x_1,v)=h^{(0)}(x_1)+\sum_{k=1}^{+\infty} \left(\frac{1}{\eta_\ep}\right)^k\, h^{(k)}(x_1,v).
$$
We require 
$h^{(0)}$ to satisfy the same boundary conditions as the whole solution $h_\ep^S$, namely
\begin{equation}\label{BC}
\left\{
 \begin{array}{l}\vspace{0.2cm}
h^{(0)}(x_1)=\rho_1,\ \ \ \ \ x_1=0, 
\vspace{0.2cm}\\
h^{(0)}(x_1)=\rho_2,\ \ \ \ \ x_1=L. 
\end{array} \right.
\end{equation}

Comparing terms of the same order we get
\begin{equation*}
\,v_1\partial_{x_1} h^{(k)}=\LL \, 
 h^{(k+1)},\ \ \ \ \ \ k\geq 0.
\end{equation*}

The first two equations read
\begin{itemize} \vspace{1mm}
\item[(i)]  $\,v_1\partial_{x_1} h^{(0)}=\LL \, h^{(1)}$,  \vspace{1mm}
\item[(ii)]  $\,v_1\partial_{x_1} h^{(1)}=\LL \, 
 h^{(2)}$,
 \end{itemize}
 \vspace{1.5mm}
which have a solution if and only if the left hand side belongs to $(Ker \LL )^{\bot}=\{g\in L^2(S_1):\ \int_{S_1} g(v)\, dv=0\}$. 
Since $v_1\partial_{x_1} h^{(0)}$ is an odd function of $v$
 we can invert the operator $\LL $ and set 
\begin{equation}\label{h1}
h^{(1)}(x_1,v)=\LL ^{-1}(v_1\partial_{x_1} h^{(0)}) + \xi^{(1)}(x_1),\ \ \ \ \ \ \ \ \ \ \ 
\end{equation}
where $\xi^{(1)}\in Ker \LL  $. We integrate equation (ii) with respect to the uniform measure on $S_1$. 
Observing that 
$\int_{S_{1}} dv\,\LL \, h^{(2)}=0$, by   \eqref{h1}
we obtain
\begin{equation*} 
\left(\int_{S_{1}}dv\, v_1
\LL ^{-1}v_1\right)\partial^2_{x_1} h^{(0)}=0,
\end{equation*}
with the boundary conditions \eqref{BC}. Therefore
\vspace{0.2mm}
\begin{equation*}
h^{(0)}(x_1)=\frac{\rho_1(L-x_1)+\rho_2x_1}{L}.
\end{equation*}
\vspace{1mm}
With the same strategy as the previous subsection, one can prove that $\xi^{(1)}(x_1)\equiv 0$. Hence
\begin{equation}\label{eq:h1S}
h^{(1)}(x_1,v)=h^{(1)}(v_1)=\displaystyle \left(\frac{\rho_2-\rho_1}{L}\right)\LL ^{-1}(v_1).
\end{equation}
Moreover by equation (ii) we get
\begin{equation*}
\begin{split}
h^{(2)}(x_1,v)&= \LL ^{-1}\big(v_1\partial_{x_1} h^{(1)}(x_1,v)  \big)+\xi^{(2)}(x_1)\\&
=\xi^{(2)}(x_1),
\end{split}
\end{equation*}
where in the last step we used \eqref{eq:h1S}.
By iterating the same procedure of the previous subsection, since in this case the source term in \eqref{ro2} is zero, we have that $\xi^{(2)}$ satisfies $\partial^2_{x_1}\xi^{(2)} = 0$.
We choose zero boundary conditions so that $\xi^{(2)}(x_1)\equiv 0$.
Then 
$$
h^{(2)}(x_1,v)\equiv 0.
$$

We consider the truncated expansion
\begin{equation}
h^S_{\ep}=h^{(0)}+\frac 1 {\eta_\ep}h^{(1)}
+\frac 1 {\eta_\ep}R_{\eta_\ep}.
\end{equation}
The remainder $R_{\eta_\ep}$ satisfies
\begin{equation}\label{restostaz}
 v_1\partial_{x_1}\,R_{\eta_\ep}=\eta_\ep \LL \,  R_{\eta_\ep}.
\end{equation}
We required 
$h^{(0)}$ to satisfy the same boundary conditions as the whole solution $h_\ep^S$, then
the boundary conditions for $R_{\eta_\ep}$ read
\begin{equation*}
\left\{
\begin{array}{l}\vspace{0.2cm}
R_{\eta_\ep}(x_1,v)=\displaystyle -\left(\frac{\rho_2-\rho_1}{L}\right)\LL ^{-1}(v_1),\ \ \ \ \ x_1=0,  \quad v_1>0,
\vspace{0.2cm}\\
R_{\eta_\ep}(x_1,v)=\displaystyle -\left(\frac{\rho_2-\rho_1}{L}\right)\LL ^{-1}(v_1),\ \ \ \ \ x_1=L,\quad v_1<0.
\end{array}\right.
\end{equation*}
The unique solution of the above problem is
\begin{equation*}
\begin{split}
R_{\eta_\ep}(x_1,v)=&-\displaystyle e^{\frac{\eta_\ep}{v_1}x_1\LL }\left(\frac{\rho_2-\rho_1}{L}\right)\LL ^{-1}(v_1)\chi(v_1>0)\\&\,- e^{-\frac{\eta_\ep}{v_1}(L-x_1)\LL }\left(\frac{\rho_2-\rho_1}{L}\right)\LL ^{-1}(v_1)\chi(v_1<0).
\end{split}
\end{equation*}
By \eqref{restostaz} we get
\begin{equation*}
-\eta_\ep\, \big(R_{\eta_\ep},\,-\LL  R_{\eta_\ep} \big)= - b_{\eta_\ep},
\end{equation*}
where the boundary term $b_{\eta_\ep}$ is given by
\begin{equation*}
b_{\eta_\ep}=- \int_{v_1>0} dv\, v_1\left(\frac{\rho_2-\rho_1}{L}\right)^2\big(\LL ^{-1}(v_1)\big)^2\left[e^{\frac{\eta_\ep}{v_1}L\LL }-1\right].
\end{equation*}
We remark that $(\cdot,\cdot)$ and  $\|\cdot\|_2$ denote the scalar product and the norm in $L^2((0,L)\times S_1)$ respectively.
Observe that $b_{\eta_\ep}\geq 0$. 
Using the spectral gap of the operator $\LL $ we get 
\begin{equation}\label{b_eta}
- b_{\eta_\ep}=-\eta_\ep\, \big(R_{\eta_\ep},\,-\LL  R_{\eta_\ep} \big) \leq -\lambda \eta_\ep \|R_{\eta_\ep}\|_2^2,
\end{equation}
where $\lambda$ is the first positive eigenvalue of $-\LL $. 
Therefore we obtain
\begin{equation*}
\|R_{\eta_\ep}\|_2 \leq \frac{C}{\sqrt{\eta_{\ep}}}. 
\end{equation*}
Since the coefficients 
$h^{(1)}$ and $h^{(2)}$ are bounded, we have that $h_\ep^S$ converges to $h^{(0)}$ in $L^2((0,L)\times S_1)$ for $\eta_\ep\to\infty$. 

$\hfill\Box$

\section{The kinetic description}\label{sec5}
\subsection{The extension argument}
We remind that $h^{out}_\ep$ is the solution of the Boltzmann equation \eqref{eq:Boltz}, therefore it can be expressed as
\begin{equation}
\begin{split}
\label{formula5}
h^{out}_{\ep}(x,v,t)&=\sum_{n\geq 0}(\mu_{\ep}\ep)^{n}\int_{0}^{t}dt_1\dots\int_{0}^{t_{n-1}}dt_n\\&
\int_{-1}^{1}d\rho_1\dots\int_{-1}^{1}d\rho_n\   \chi(\tau<t_n)\, \chi(\tau>0)\, e^{-2\mu_{\ep}\ep (t-\tau)}f_B(\gamma^{-(t-\tau)}(x,v)),
\end{split}
\end{equation}
with $f_B(x,v)$ defined in \eqref{def:fB} 
and 
\begin{equation}
\gamma^{-(t-\tau)}(x,v)=(x-v(t-\tau-t_1)-v_1(t_1-t_2)\dots-v_nt_n,v_n).
\end{equation}

\begin{lemma}\label{equivalence1}
Let $h^{out}_{\ep}$ be the solution of the Boltzmann equation \eqref{eq:Boltz} defined in \eqref{formula5}. Then
\begin{equation}
\begin{split}
\label{formula5*}
h^{out}_{\ep}(x,v,t)&=\sum_{N\geq 0} e^{-2\mu_{\ep}\ep t}(\mu_{\ep}\ep)^{N}\int_{0}^{t}dt_1\dots\int_{0}^{t_{N-1}}dt_N\\&
\int_{-1}^{1}d\rho_1\dots\int_{-1}^{1}d\rho_N\, \chi(\tau>0)\, f_B(\gamma^{-(t-\tau)}(x,v)).
\end{split}
\end{equation}
\end{lemma}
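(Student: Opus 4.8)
The plan is to pass from \eqref{formula5} to \eqref{formula5*} by resumming the collisions that occur, in backward time, \emph{after} the trajectory has first left $\Lambda$. The starting observation is that in \eqref{formula5} every collision time satisfies $t_i>\tau$: this is exactly the content of $\chi(\tau<t_N)$, since the times are ordered as $t>t_1>\dots>t_n>0$ and $t_N$ denotes the smallest one. Hence all collisions in \eqref{formula5} are ``physical'', occurring between the entrance time $\tau$ and the final time $t$, and the boundary value $f_B(\gamma^{-(t-\tau)}(x,v))$ depends only on the portion of the backward trajectory on $[0,t-\tau]$. I would then split the weight as $e^{-2\mu_\ep\ep(t-\tau)}=e^{-2\mu_\ep\ep t}\,e^{2\mu_\ep\ep\tau}$ and expand the second factor, using for the impact parameters $\int_{-1}^{1}d\sigma=2$ and the fact that the ordered simplex $\{\tau>s_1>\dots>s_m>0\}$ has measure $\tau^m/m!$:
\begin{equation*}
e^{2\mu_\ep\ep\tau}=\sum_{m\ge 0}(\mu_\ep\ep)^m\,2^m\,\frac{\tau^m}{m!}=\sum_{m\ge 0}(\mu_\ep\ep)^m\int_{\tau>s_1>\dots>s_m>0}\!\!ds_1\cdots ds_m\int_{-1}^{1}d\sigma_1\cdots\int_{-1}^{1}d\sigma_m .
\end{equation*}
This reinterprets each term in the expansion of $e^{2\mu_\ep\ep\tau}$ as $m$ additional fictitious collisions, placed in real time inside $(0,\tau)$ — equivalently, in backward time beyond the exit point $t-\tau$ — each carrying its own impact parameter.

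Next I would merge these fictitious collisions with the physical ones. Because the $m$ new times lie below $\tau$ while the $N$ physical times lie above $\tau$, the concatenated sequence is automatically ordered, and relabelling $(t_1,\dots,t_N,s_1,\dots,s_m)$ as $t_1>\dots>t_{N+m}$ yields an ordered history in the full simplex $\{t>t_1>\dots>t_{N+m}>0\}$ carrying no constraint between the smallest time and $\tau$. Summing over $N$ and $m$ and reindexing by the total number $N+m$ reproduces precisely the unconstrained sum in \eqref{formula5*}, with the prefactor $e^{-2\mu_\ep\ep t}$. Throughout one keeps $\chi(\tau>0)$ unchanged: the excluded event $\tau=0$ contributes to $h_\ep^{in}$, not to $h_\ep^{out}$. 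The absolute convergence of the series, guaranteed by $\mu_\ep\ep=\mu\eta_\ep$ together with the $L^\infty$ bounds already used in Section~\ref{proofs}, legitimizes the rearrangement and the application of Fubini.

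The one fact that makes the whole manipulation valid — and the main obstacle to spell out — is that neither $\tau$ nor $f_B(\gamma^{-(t-\tau)}(x,v))$ is affected by inserting the fictitious collisions. I would establish this geometrically: the first backward hitting time $t-\tau$ is determined by the trajectory only on $[0,t-\tau]$, which depends solely on collisions occurring at backward times smaller than $t-\tau$, i.e.\ at real times larger than $\tau$. A collision placed at real time in $(0,\tau)$ acts on the trajectory only after it has already crossed $\partial\Lambda$, so it changes neither the value of $\tau$ nor the point $\gamma^{-(t-\tau)}(x,v)$. Consequently the integrand is constant in $s_1,\dots,s_m$ on the region where these times lie in $(0,\tau)$, which is exactly what allows the factor $e^{2\mu_\ep\ep\tau}$ to be integrated back in and recombined with $e^{-2\mu_\ep\ep t}$ to give $e^{-2\mu_\ep\ep(t-\tau)}$. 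Making this independence precise (so that the counting of physical collisions is well defined up to a null set of configurations) is where the care is required; the remaining combinatorial resummation is then routine.
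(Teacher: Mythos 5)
Your proposal is correct and follows essentially the same route as the paper: the paper inserts the identity $1=\sum_{m\geq 0}(\mu_{\ep}\ep)^{m}\int_{0}^{t}ds_1\cdots\int_{0}^{s_{m-1}}ds_m\,\chi(s_1\leq\tau)\int_{-1}^{1}d\xi_1\cdots\int_{-1}^{1}d\xi_m\,e^{-2\mu_{\ep}\ep\tau}$, which is exactly your expansion of $e^{2\mu_\ep\ep\tau}$ into fictitious collisions in $(0,\tau)$, and then merges and reindexes the two ordered families using $\sum_{n=0}^{N}\chi(t_n>\tau)\chi(t_{n+1}\leq\tau)=1$. Your geometric justification that collisions at real times in $(0,\tau)$ alter neither $\tau$ nor $f_B(\gamma^{-(t-\tau)}(x,v))$ is precisely the point the paper relies on.
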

The above identity follows from the fact that in the last term we added fictitious jumps, those in the time interval $(0, \tau)$ which do not affect $f_B(\gamma^{-(t-\tau)}(x,v))$ but allows us to remove the indicator function $\chi(t_n>\tau)$ replacing consequently the factor $e^{-2\mu_{\ep}\ep (t-\tau)}$ by the more handable factor $e^{-2\mu_{\ep}\ep t}$. 
In view of the particle interpretation it is convenient to think the trajectory $\gamma^{-s}$, $s\in(0,t)$ as extended outside $\Lambda$, see Figure \ref{fig:EsTr}.
\begin{figure}[ht]
\centering
\includegraphics[scale= 0.5]{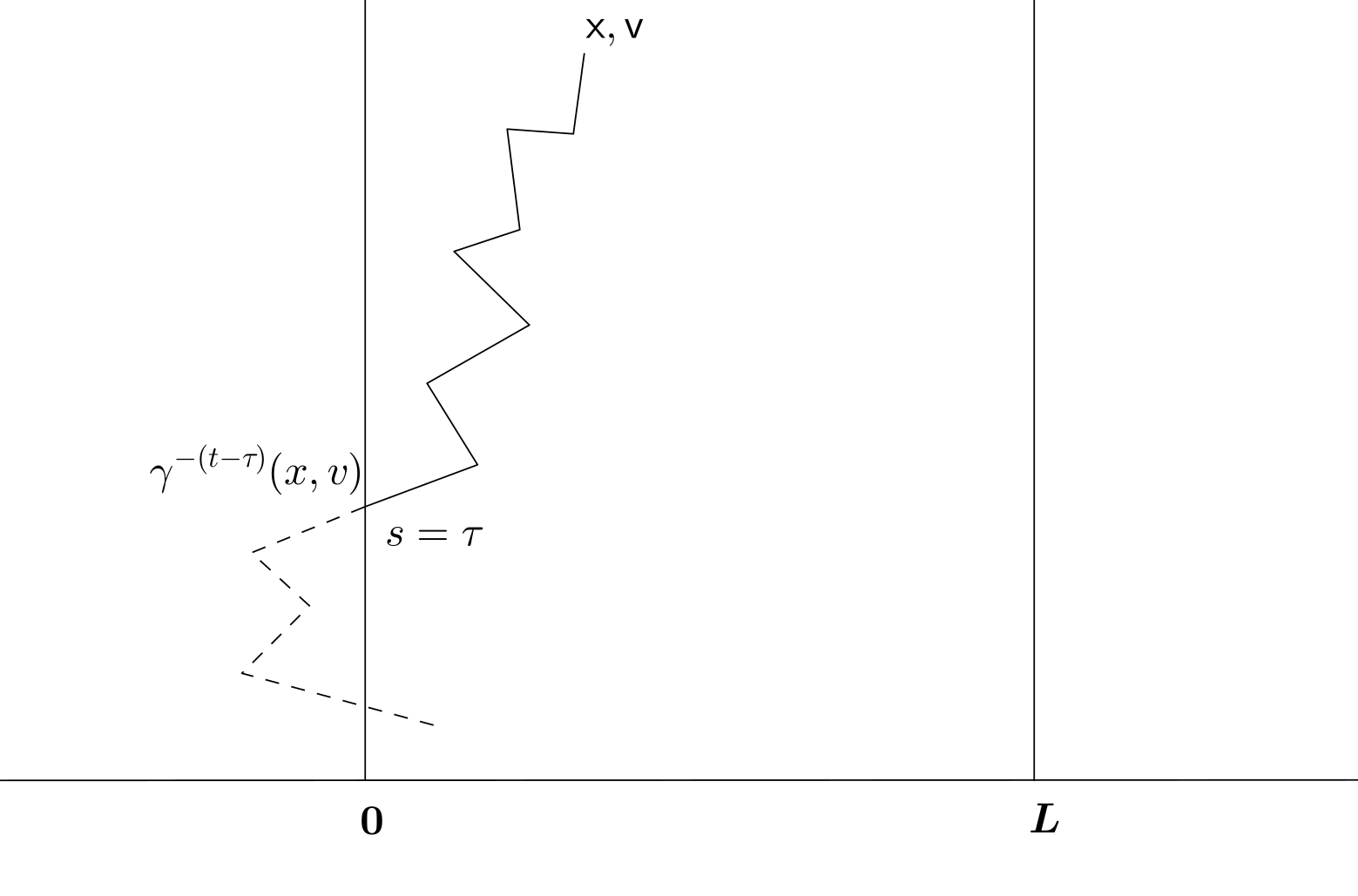}
\caption{Extension of the trajectory outside $\Lambda$}
\label{fig:EsTr}
\end{figure}
The dashed part of the trajectory is ininfluent for the evaluation of $h^{out}_{\ep}$.
\begin{proof}
Observe that for $\tau>0$, $\tau$ given,
\begin{equation*}
1=\sum_{m\geq 0}(\mu_{\ep}\ep)^{m}\int_{0}^{t}ds_1\dots\int_{0}^{s_{m-1}}ds_m\  \chi(s_1\leq \tau)
\int_{-1}^{1}d\xi_1\dots\int_{-1}^{1}d\xi_m\, e^{-2\mu_{\ep}\ep \tau}.
\end{equation*}
Using the previous identity we can express $h^{out}_{\ep}$ as 
\begin{equation*}
\begin{split}
h^{out}_{\ep}(x,v,t)&=\sum_{N\geq 0} e^{-2\mu_{\ep}\ep t}(\mu_{\ep}\ep)^{N}\int_{0}^{t}dt_1\dots\int_{0}^{t_{N-1}}dt_N \int_{-1}^{1}d\rho_1\dots\int_{-1}^{1}d\rho_N\, \\&
\left(\sum_{n=0}^{N} \chi(t_n>\tau) \chi(t_{n+1}\leq \tau)\right)\,  \chi(\tau>0)\, f_B(\gamma^{-(t-\tau)}(x,v)),
\end{split}
\end{equation*}
with the convention that $t_0=t$, $t_{N+1}=0$.
Since
\begin{equation*}
\left(\sum_{n=0}^{N} \chi(t_n>\tau) \chi(t_{n+1}\leq \tau)\right)=1,
\end{equation*}
we obtain the desired result.
\end{proof}

\subsection{Proof of Proposition \ref{th:propCIN}}     \label{proof:propCIN}
By \eqref{def:fepOUT} for $(x,v)\in\Lambda\times S_1$, $t>0$ we have
 \begin{equation*}
f^{out}_{\ep}(x,v,t)=e^{-\mu_{\ep}|B_t(x)\cap\Lambda\setminus B_{\ep}(x)|}\sum_{q\geq 0}\frac{\mu_{\ep}^{q}}{q!}\int_{(B_t(x)\cap\Lambda\setminus B_{\ep}(x))^q}d\mbf{c}_{q}\, \chi(\tau>0)f_B(T^{-(t-\tau)}_{\mbf{c}_{q}}(x,v)).
\end{equation*}
Here $T^{-(t-\tau)}_{\mbf{c}_{q}}(x,v)$ is the flow associated to the initial datum $(x,v)$ for a given scatterers configuration $\mbf{c}_{q}$, $B_t(x)$ and $B_{\ep}(x)$ denote the disks centered in $x$ with radius $t$ and $\ep$ respectively. 

Since $f_B(T^{-(t-\tau)}_{\mbf{c}_{q}}(x,v))$ depends only on the scatterer configurations inside $\Lambda\cap B_t(x)$, 
we want to add fictitious scatterers outside $\Lambda$ which do not affect the value $f_B(T^{-(t-\tau)}_{\mbf{c}_{q}}(x,v))$ in the same spirit of Lemma \ref{equivalence1}. However there is a small difficulty because the scatterers located in the vertical strips $[-\ep,0]\times\R$ and $[L,L+\ep]\times\R$ actually can modify the value of $\tau$. For this reason we introduce 
\begin{equation*}
\begin{split}\label{brevef}
\breve{f}^{out}_{\ep}(x,v,t)=&\,e^{-\mu_{\ep}|B^{\ep}_t(x)|}\sum_{Q\geq 0}\frac{\mu_{\ep}^{Q}}{Q!}\int_{(B^{\ep}_t(x))^Q}d\mbf{c}_{Q}\, \chi(\tau>0)\\& 
\big(1-\chi_{\partial\Lambda}{(\mbf{c}_{Q})}\big)f_B(T^{-(t-\tau)}_{\mbf{c}_{Q}}(x,v)),
\end{split}
\end{equation*}
where 
\begin{equation}\label{def:chiLambda}
\begin{split}
\chi_{\partial\Lambda}(\mbf{c}_{Q})=\chi\{\mbf{c}_{Q}: \, \exists i=1,\dots Q\; \text{s.t.}\; c_i\in[-\ep,0]\times\R\cup[L,L+\ep]\times\R\\\,\text{and}\,|x_{\ep}(-s)-c_i|=\ep,\,s\in[0,t]\},
\end{split}
\end{equation} 
allows to have a consistency in the definition of the hitting time for the extended dynamics. 
Here $B^{\ep}_t(x):=B_t(x)\setminus B_{\ep}(x)$. We expect that the contribution due to the obstacles with centers in the vertical strips $[-\ep,0]\times\R$, $[L,L+\ep]\times\R$ influencing the trajectory is indeed negligible in the limit. This fact will be discussed later on (see Section \ref{pathological}).

Since $|B^{\ep}_t(x)\setminus \{[-\ep,0]\times\R\cup[L,L+\ep]\times\R\}|  \leq |B^{\ep}_t(x)|$, then $f^{out}_{\ep}\geq \breve{f}^{out}_{\ep}$.

We distinguish the obstacles of the configuration $\mbf{c}_{Q}=c_1\dots c_Q$ which, up to the time $t$, influence the motion, called internal obstacles, and the external ones. More precisely, $c_i$ is internal if
\begin{equation*}
\inf_{0\leq s\leq t}|x_{\ep}(-s)-c_i|=\ep,
\end{equation*}
while $c_i$ is external if 
\begin{equation*}
\inf_{0\leq s\leq t}|x_{\ep}(-s)-c_i|>\ep.
\end{equation*}
Here $(x_{\ep}(-s),v_{\ep}(-s))=T^{-s}_{\mbf{c}_{Q}}(x,v)$, $s\in [0,t]$. We observe that the characteristic function $\chi_{\partial\Lambda}$ depends only on  internal obstacles. Therefore, by integrating over the external obstacles we obtain
\begin{equation*}
\begin{split}
\breve{f}^{out}_{\ep}(x,v,t)=&\sum_{N\geq 0}\frac{\mu_{\ep}^{N}}{N!}\int_{B^{\ep}_t(x)^N}d\mbf{b}_{N}\, e^{-\mu_{\ep}|\T_{t}(\mbf{b}_{N})|}\, \chi(\tau>0)\\&\, 
\chi(\{\mbf{b}_{N}\;\text{internal}\})\big(1-\chi_{\partial\Lambda}{(\mbf{b}_{N})}\big)f_B(T^{-(t-\tau)}_{\mbf{b}_{N}}(x,v)),
\end{split}
\end{equation*}
where $\T_{t}(\mbf{b}_{N})$ is the tube 
\begin{equation*}
\T_{t}(\mbf{b}_{N})=\{y\in B^{\ep}_t(x)\;\text{s.t.}\;\exists s\in [0,t]\;\text{s.t.}\;|y-x_{\ep}(-s)|\leq\ep\}.
\end{equation*}
We define 
\begin{equation*}
\begin{split}
\tilde{f}^{out}_{\ep}(x,v,t)=& e^{-2\mu_{\ep}\ep t}\sum_{N\geq 0}\frac{\mu_{\ep}^{N}}{N!}\int_{B^{\ep}_t(x)^N}d\mbf{b}_{N}
\,\chi(\{\mbf{b}_{N}\;\text{internal}\})\\&\,
\ (1-\chi_{\partial\Lambda}{(\mbf{b}_{N})}\big)\,f_B(T^{-(t-\tau)}_{\mbf{b}_{N}}(x,v))\, \chi(\tau>0).
\end{split}
\end{equation*}
Since
$|\T_{t}(\mbf{b}_{N})|\leq 2\ep t$,
then
$
f^{out}_{\ep}\geq \breve{f}^{out}_{\ep}\geq \tilde{f}^{out}_{\ep}.
$

Note that, according to a classical argument introduced in \cite{G} (see also \cite{DP}, \cite{DR}), we remove from $\tilde{f}^{out}_{\ep}$ all the bad events, namely those untypical with respect to the Markov process described by $h^{out}_{\ep}$. Then we will show they are unlikely.

For any fixed initial condition $(x,v)$ we order the obstacles $b_1,\dots,b_N$ according to the scattering sequence. Let $\rho_i$ and $t_i$ be the impact parameter and the hitting time of the light particle with $\partial B_{\ep}(b_i)$ respectively.
Then we perform the following change of variables
\begin{equation}
\label{change var}
b_1,\dots,b_N\rightarrow \rho_1,t_1,\dots,\rho_N,t_N
\end{equation}
with 
\begin{equation*}
0\leq t_N<t_{N-1}<\dots<t_1\leq t.
\end{equation*}
Conversely, fixed the impact parameters $\{\rho_i\}$ and the hitting times $\{t_i\}$ we construct the centers of the obstacles $b_i=b(\rho_i,t_i)$. By performing the backward scattering we construct a trajectory $\gamma^{-s}(x,v):=(\xi_{\ep}(-s),\omega_{\ep}(-s)),\ s\in[0,t],$
where 
\begin{equation}\label{limiting trajectory}
\left\{\begin{array}{ll}
\xi_{\ep}(-t)=x-v(t-t_1)-v_1(t_1-t_2)\dots-v_Nt_N&\\
\omega_{\ep}(-t)=v_N.&
\end{array}\right.
\end{equation}
Here $v_1,\dots,v_N$ are the incoming velocities. We remark that $\omega_{\ep}$ is an autonomous jump process and $\xi_{\ep}$
is an additive functional of $\omega_{\ep}$.

Observe that the map \eqref{change var} is one-to-one, and so $(\xi_{\ep}(-s),\omega_{\ep}(-s))=(x_{\ep}(-s),v_{\ep}(-s))$, only outside the following pathological situations.
\begin{itemize}
\item[i)] \textbf{Recollisions}.\\ There exists $b_i$ such that for $s\in(t_{j+1},t_{j})$, $j>i$, $\xi_{\ep}(-s)\in \partial B(b_i,\ep)$. 
\item[ii)] \textbf{Interferences}.\\ There exists $b_j$ such that $\xi_{\ep}(-s)\in B(b_j,\ep)$ for $s\in(t_{i+1},t_{i})$, $j>i$.
\end{itemize}
In order to skip such events we define
\begin{equation}
\begin{split}
\label{formula4}
\bar{f}^{out}_{\ep}(x,v,t)=& e^{-2\mu_{\ep}\ep t}\sum_{N\geq 0}\mu_{\ep}^{N}\int_{0}^{t}dt_1\dots\int_{0}^{t_{N-1}}dt_N\,\int_{-\ep}^{\ep}d\rho_1\dots\int_{-\ep}^{\ep}d\rho_N\, \\&
 \chi(\tau>0)\,(1-\chi_{\partial\Lambda})
(1-\chi_{rec})(1-\chi_{int})f_B(\gamma^{-(t-\tau)}(x,v)),
\end{split}
\end{equation}
where
\begin{equation}\label{def:chi}
\begin{split}
\chi_{rec}=&\chi(\{\mbf{b}_N\;\text{s.t.}\;\text{i)}\;\text{is}\;\text{realized}\})\\
\chi_{int}=&\chi(\{\mbf{b}_N\;\text{s.t.}\;\text{ii)}\;\text{is}\;\text{realized}\}).
\end{split}
\end{equation}
Observe that in \eqref{formula4} $\gamma^{-(t-\tau)}(x,v)=(x_{\ep}(-(t-\tau)),v_{\ep}(-(t-\tau)))$. Moreover $$\bar{f}^{out}_{\ep}\leq\tilde{ f}^{out}_{\ep}\leq \breve{f}^{out}_{\ep}\leq f^{out}_{\ep}.$$ 

Next we represent, thanks to Lemma \ref{equivalence1}, $h_{\ep}^{out}$, solution to equation \eqref{eq:Boltz}, as
\begin{equation}
\begin{split}
\label{formula5NN}
h^{out}_{\ep}(x,v,t)= & e^{-2\mu_{\ep}\ep t}\sum_{N\geq 0}\mu_{\ep}^{N}\int_{0}^{t}dt_1\dots\int_{0}^{t_{N-1}}dt_N \\&
\int_{-\ep}^{\ep}d\rho_1\dots\int_{-\ep}^{\ep}d\rho_N\, \chi(\tau>0)\, f_B(\gamma^{-(t-\tau)}(x,v)).
\end{split}
\end{equation}
Observe that 
\begin{equation}\label{eq:chi}
1-(1-\chi_{\partial\Lambda})(1-\chi_{rec})(1-\chi_{int}) \leq \chi_{\partial\Lambda}+\chi_{rec}+\chi_{int}.
\end{equation}
Then by \eqref{formula4} and \eqref{formula5NN} we obtain
\begin{equation}\label{est:ffi1}
|h^{out}_{\ep}(t)-\bar{f}^{out}_{\ep}(t)|\leq \ffi_1(\ep,t),
\end{equation}
with
\begin{equation}\label{def:ffi1}
\begin{split}
\ffi_1(\ep,t):=& \,\|f_B\|_{\infty}e^{-2\mu_{\ep}\ep t} \sum_{N\geq 0}(\mu_{\ep})^N\int_{0}^{t}dt_1\dots\int_{0}^{t_{N-1}}dt_N
\\&
\int_{-\ep}^{\ep}d\rho_1\dots\int_{-\ep}^{\ep}d\rho_N \, \{\chi_{\partial\Lambda}+\chi_{rec}+\chi_{int}\}.
\end{split}
\end{equation}

We state the following result. The proof is postponed to Section \ref{pathological}.
\begin{lemma}\label{th:prop}
Let $\ffi_1(\ep,t)$ be defined in \eqref{def:ffi1}. For any $t\in[0,T]$ we have
\begin{equation}\label{norm:ffi1}
\|\ffi_1(\ep,t)\|_{L^{\infty}}\leq C\varepsilon^{\frac{1}{2}}\, \eta_\ep^3\, t^2. 
\end{equation}
\end{lemma}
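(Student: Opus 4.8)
The plan is to estimate separately the three contributions generated by the three indicators in \eqref{def:ffi1}, writing
\[
\|\ffi_1(\ep,t)\|_{\infty}\le \|f_B\|_{\infty}\big(\Phi_{\partial\Lambda}+\Phi_{rec}+\Phi_{int}\big),
\]
where each $\Phi$ denotes the same series carrying only one of the indicators $\chi_{\partial\Lambda}$, $\chi_{rec}$, $\chi_{int}$. The structural fact I would exploit is that the series \emph{without} any indicator is normalized: since $\mu_\ep\ep=\mu\eta_\ep$ and $\int_{0<t_N<\dots<t_1<t}dt_1\cdots dt_N=t^N/N!$ with $\int_{-\ep}^{\ep}d\rho=2\ep$, one has $e^{-2\mu_\ep\ep t}\sum_N (2\mu_\ep\ep t)^N/N!=1$. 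Hence each $\Phi$ is the average, under this Poisson-type weight, of the number of bad configurations of the given type, and the task reduces to showing that inserting a bad-event indicator costs a small factor. I would bound each indicator by a union over the collision labels responsible for it, $\chi_{rec}\le\sum_{i<j}\chi_{rec}^{(i,j)}$ and similarly for $\chi_{int}$, then estimate one term at a time, freezing all variables except those directly constrained by the event.

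For the recollision term, fix $i<j$: the event $\chi_{rec}^{(i,j)}$ forces the free segment issued after the $j$-th collision to re-enter the disk $B(c_i,\ep)$. Viewed from the $j$-th collision point this disk subtends an angle of order $\ep/D$, with $D=|c_j-c_i|$, so the outgoing direction $v_j$ — equivalently, through the scattering law and the Jacobian $d\rho=\ep\cos\alpha\,d\alpha$, the impact parameter $\rho_j$ — is confined to a set of measure $O(\ep\cdot\ep/D)$ in place of the free $2\ep$. The singularity as $D\to 0$ is the crux, and I would resolve it with the cutoff at $D\sim\ep^{1/2}$: for $D>\ep^{1/2}$ the cost factor $\ep/D$ is already $<\ep^{1/2}$, while the region $D<\ep^{1/2}$ is controlled by the small measure of configurations in which two collision points lie within $\ep^{1/2}$ (a genuine smallness, since the typical inter-collision distance is the mean free path $\sim\eta_\ep^{-1}\gg\ep^{1/2}$ in our regime). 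Either way one extracts a factor $\ep^{1/2}$ per recolliding pair.

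After this geometric gain I would sum over the labels $i<j$ and over $N$ against the Poisson weight. As the mean number of collisions up to time $t$ is $2\mu\eta_\ep t$, the summation over the distinguished collisions reproduces powers of $\eta_\ep t$; tracking which time and impact-parameter integrations are consumed by the constraint yields the overall combinatorial factor $\eta_\ep^3 t^2$, whence $\Phi_{rec}\le C\ep^{1/2}\eta_\ep^3 t^2$. The interference term is handled in the same spirit: $\chi_{int}^{(i,j)}$ requires a later scatterer $c_j$ to lie in the $\ep$-tube around an earlier free segment, the transversal width $\ep$ of that tube playing the role of the small angular aperture, so the identical $\ep^{1/2}$-cutoff argument on the relevant distance applies and gives a bound of the same order. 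For $\Phi_{\partial\Lambda}$ I would note that $\chi_{\partial\Lambda}$ requires an internal scatterer in the strips $[-\ep,0]\times\R$ or $[L,L+\ep]\times\R$ to be actually touched; the width $\ep$ of these strips again supplies the smallness, and this contribution turns out to be milder.

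The delicate point — and the step I expect to be the main obstacle — is the recollision estimate, specifically the non-immediate recollisions and the degeneracy of the factor $\ep/D$ at small $D$: a naive bound diverges, and the whole strength of the argument rests on balancing the angular gain against the probability that two collision points are anomalously close, which is exactly what the $\ep^{1/2}$ cutoff optimizes. I would therefore organize the proof around this dichotomy; I expect the bookkeeping of uniformity in $x$ (needed for the $L^\infty$ norm) and the reconstruction of the full $\eta_\ep^3$ power to be the most laborious part, while the interference and boundary contributions, though requiring analogous care, are structurally the same and do not exceed the recollision bound.
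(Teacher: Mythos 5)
Your proposal follows essentially the same route as the paper's proof: a union bound over collision pairs, a dichotomy with the cutoff optimized at $\ep^{1/2}$ (an angular/impact-parameter gain of order $\ep/D$ when the two relevant obstacles are separated, and the restriction of a single time integral to an interval of length $O(\ep^{1/2})$ when they are close), followed by the Poisson-type combinatorics that produces the factor $\eta_\ep^3 t^2$. The only cosmetic deviations are that the paper's dichotomy for the interference term is on the angle between the tube direction and the later outgoing velocity (transverse vs.\ nearly parallel) rather than on a distance, and that the boundary contribution $\chi_{\partial\Lambda}$ is not actually milder --- it is handled by the same transverse/tangential argument and yields the same order $C\ep^{1/2}\eta_\ep^3 t^2$.
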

\vspace{4.5mm}

Let us estimate the difference $\left|f^{out}_\ep(t)-h^{out}_\ep(t)\right|$. By \eqref{est:ffi1} we have
\begin{equation}\label{monotonia}
\begin{split}
\left|f^{out}_\ep(t)-h^{out}_\ep(t)\right|& \leq \left|f^{out}_\ep(t)-\bar{f}^{out}_\ep(t)\right|+\left|\bar{f}^{out}_\ep(t)-h^{out}_\ep(t)\right|\\&
\leq \left|f^{out}_\ep(t)-\bar{f}^{out}_\ep(t)\right|+ \ffi_1(\ep,t).
\end{split}
\end{equation}
Since $\bar{f}^{out}_{\ep}\leq f^{out}_{\ep}$, the difference $f^{out}_\ep(t)-\bar{f}^{out}_\ep(t)$ is non negative and we can skip the absolute value. Moreover 
\begin{equation}
f^{out}_\ep(t)-\bar{f}^{out}_\ep(t)\leq \left(f^{out}_\ep(t)-\breve{f}^{out}_\ep(t)\right)+\left(\breve{f}^{out}_\ep(t)-\bar{f}^{out}_\ep(t)\right).
\end{equation}

Using the fact that the map \eqref{change var} is one-to-one outside the pathological sets we can write $\bar{f}^{out}_\ep$ in \eqref{formula4} as 
\begin{equation*}
\begin{split}
\bar{f}^{out}_\ep(t)&=e^{-2\mu_{\ep}\ep t}\sum_{N\geq 0}\frac{\mu_{\ep}^{N}}{N!}\int_{B^{\ep}_t(x)^N}d\mbf{b}_{N}\,
\chi(\{\mbf{b}_{N}\;\text{internal}\})\, \chi(\tau>0) \\&
(1-\chi_{\partial\Lambda})
(1-\chi_{rec})(1-\chi_{int})f_B(T^{-(t-\tau)}_{\mbf{b}_{N}}(x,v)). 
\end{split}
\end{equation*}
Hence
\begin{equation*}
\begin{split}
\breve{f}^{out}_\ep(t)-\bar{f}^{out}_\ep(t)=&\sum_{N\geq 0}\frac{\mu_{\ep}^{N}}{N!}\int_{B^{\ep}_t(x)^N}d\mbf{b}_{N}\, f_B(T^{-(t-\tau)}_{\mbf{b}_{N}}(x,v))\
\chi(\{\mbf{b}_{N}\;\text{internal}\})\, \chi(\tau>0)
\\&
\qquad (1-\chi_{\partial\Lambda}) \big(e^{-\mu_{\ep}|\T_{t}(\mbf{b}_{N})|}-e^{-2\, \mu_{\ep}\ep\, t}\,
(1-\chi_{rec})(1-\chi_{int})
\big)\\
\leq &\,|| f_B||_\infty\sum_{N\geq 0}\frac{\mu_{\ep}^{N}}{N!}\int_{B^{\ep}_t(x)^N}d\mbf{b}_{N}\
\chi(\{\mbf{b}_{N}\;\text{internal}\})\, 
\\&
\qquad 
\big(e^{-\mu_{\ep}|\T_{t}(\mbf{b}_{N})|}-e^{-2\, \mu_{\ep}\ep\, t}\,
(1-\chi_{rec})(1-\chi_{int})
\big).
\end{split}
\end{equation*}

By observing that
\begin{equation*}
\sum_{N\geq 0}\frac{\mu_{\ep}^{N}}{N!}\int_{B^{\ep}_t(x)^N}d\mbf{b}_{N}\,
\chi(\{\mbf{b}_{N}\;\text{internal}\})\,e^{-\mu_{\ep}|\T_{t}(\mbf{b}_{N})|}=1,
\end{equation*}
we obtain
\begin{equation*}
\begin{split}
\breve{f}^{out}_\ep(t)-\bar{f}^{out}_\ep(t)& \leq
 || f_B||_\infty 
 \left(1-e^{-2\, \mu_{\ep}\ep\, t}\right.\, \sum_{N\geq 0} \mu_{\ep}^{N}\int_{0}^{t}dt_1\dots\int_{0}^{t_{N-1}}dt_N
 \\& 
\left.\quad\  \int_{-\ep}^{\ep}d\rho_1\dots\int_{-\ep}^{\ep}d\rho_N\, 
(1-\chi_{rec})(1-\chi_{int})\right).
\end{split}
\end{equation*}
Observe that 
\begin{equation*}
1-(1-\chi_{rec})(1-\chi_{int}) \leq \chi_{rec}+\chi_{int}.
\end{equation*}
Hence we get
\begin{equation}\label{alessia1}
\breve{f}^{out}_\ep(t)-\bar{f}^{out}_\ep(t)\leq \ffi_1(\ep,t), 
\end{equation}
with $\ffi_1$ defined in \eqref{def:ffi1}.

Now we consider $f^{out}_\ep(t)-\breve{f}^{out}_\ep(t)$. We observe that
\begin{equation*}
\begin{split}
f^{out}_{\ep}(x,v,t)=&\,e^{-\mu_{\ep}|B^{\ep}_t(x)\setminus \partial\Lambda^{\ep}|}\sum_{Q\geq 0}\frac{\mu_{\ep}^{Q}}{Q!}\int_{(B^{\ep}_t(x))^Q}d\mbf{c}_{Q}\, \chi(\tau>0)\\& 
\big(1-\chi_{\partial\Lambda}{(\mbf{c}_{Q})}\big)f_B(T^{-(t-\tau)}_{\mbf{c}_{Q}}(x,v)),
\end{split}
\end{equation*}
where $\partial\Lambda^{\ep}:= \big([-\ep,0]\cup [L,L+\ep]\big)\times \R$. By using the previous strategy one can prove
\begin{equation}\label{alessia2}
f^{out}_\ep(t)-\breve{f}^{out}_\ep(t)\leq  \ffi_1(\ep,t). 
\end{equation}
Therefore \eqref{monotonia}, \eqref{alessia1}, \eqref{alessia2} and \eqref{norm:ffi1} imply
\begin{equation*}
\|f^{out}_{\ep}(t)-h^{out}_{\ep}(t)\|_{\infty}\leq C\varepsilon^{\frac{1}{2}}\, \eta_\ep^3\, t^2.
\end{equation*}

\subsection{Proof of Lemma \ref{th:prop} (the control of the pathological sets)}\label{pathological}

For any measurable function $u$ of the process $(\xi_{\ep},\omega_{\ep})$ defined in \eqref{limiting trajectory} we set
\begin{equation*}
\begin{split}
\EE _{x,v}(u)&=e^{-2\mu_{\ep}\ep t} \sum_{N\geq 0}(\mu_{\ep})^N\int_{0}^{t}dt_1\dots\int_{0}^{t_{N-1}}dt_N\\&
\int_{-\ep}^{\ep}d\rho_1\dots\int_{-\ep}^{\ep}d\rho_N \,u(\xi_{\ep},\omega_{\ep}).
\end{split}
\end{equation*}
Then we realize that 
\begin{equation*}
\ffi_1(\ep,t)= \|f_B\|_{\infty}\,\EE_{x,v}[\chi_{\partial\Lambda}+\chi_{rec}+\chi_{int}]
\end{equation*}
and we estimate separately the events in \eqref{def:chiLambda} and \eqref{def:chi}.

We consider the interference event.  Let $t_i$ the first time the light particle hits the i-th scatterer, $v_i^{-}$ the incoming velocity and $v_i^{+}$ the outgoing velocity (for the backward trajectory). Moreover we fix the axis in such a way that $v_i^{+}$ is parallel to the $x$ axis. 
We have
\begin{equation*} 
\chi_{int}\leq \sum_{i=1}^{N}\sum_{j>i}\chi_{int}^{i,j},
\end{equation*}
where $\chi_{int}^{i,j}=1$ if the obstacle with center $b_j$ belongs to the tube spanned by $\xi_{\ep}(-s)$ for $s\in (t_{i+1},t_{i})$.
We denote by $\alpha$ the angle between $v_i^{+}$ and $v_{j-1}^{+}$.  We have two situations, when the velocity $v_{j-1}^{+}$ is transverse to $v_i^{+}$ (i.e. $\alpha>\ep^{\gamma}$ for a suitable positive $\gamma$) or when the velocity $v_{j-1}^{+}$ is almost parallel to $v_i^{+}$ (i.e. $\alpha\leq\ep^{\gamma}$). Then
\begin{equation}
\begin{split}
\label{Markov error int}
\EE _{x,v}[\chi_{int}]\leq
\EE _{x,v}\big[ \sum_{i=1}^{N}\sum_{j>i}\chi_{int}^{i,j}\chi(\alpha>\ep^{\gamma})\big]+\EE _{x,v}\big[ \sum_{i=1}^{N}\sum_{j>i}\chi_{int}^{i,j}\chi(\alpha\leq\ep^{\gamma})\big ].
\end{split}
\end{equation}
\begin{figure}[ht]
\centering
\includegraphics[scale= 0.6]{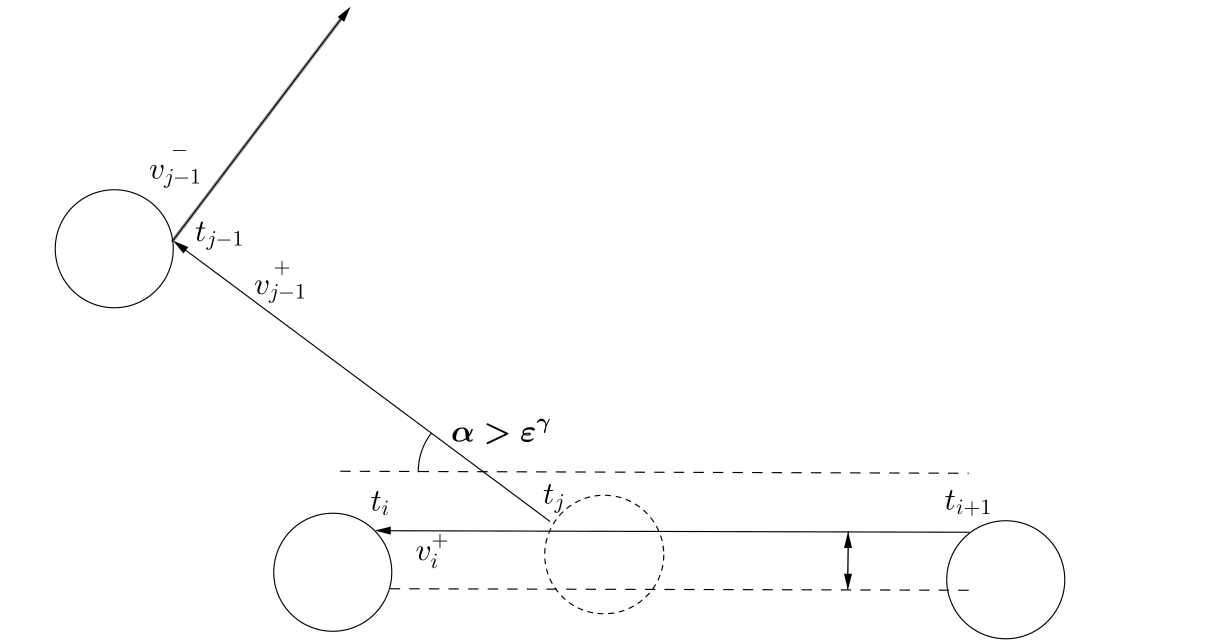}
\caption{Backward Interference-First case}\label{fig:1}
\end{figure}

We estimate separately the two contributions. To estimate the first term we fix all the variables $\{t_h\}_{h=1}^{N}$, $\{ \rho_h\}_{h=1}^{N}$ except $t_j$. By a simple geometrical argument we argue that the integral over $t_j$ is restricted over an interval of measure at most $C\ep^{1-\gamma}$. Hence we get
\begin{equation}
\label{Markov error int1}
\begin{split}
&\,\EE _{x,v}\big[ \sum_{i=1}^{N}\sum_{j>i}\chi_{int}^{i,j}\chi(\alpha>\ep^{\gamma})\big]\\&
\leq e^{-2\mu_{\ep}\ep t}  \sum_{N\geq 1}(N)^2(2\mu_{\ep}\ep)^N\frac{ t^{N-1}}{(N-1)!}C\ep^{1-\gamma}\\&
\leq e^{-2\mu_{\ep}\ep t}  (2\mu_{\ep}\ep)^{3}t^2\sum_{N\geq 3}(2\mu_{\ep}\ep)^{N-3}\frac{ t^{N-3}}{(N-3)!}C\ep^{1-\gamma}\\&
\leq C\ep^{1-\gamma}\eta_\ep^3 t^2.
\end{split}
\end{equation}
\begin{figure}[ht]
\centering
\includegraphics[scale= 0.17]{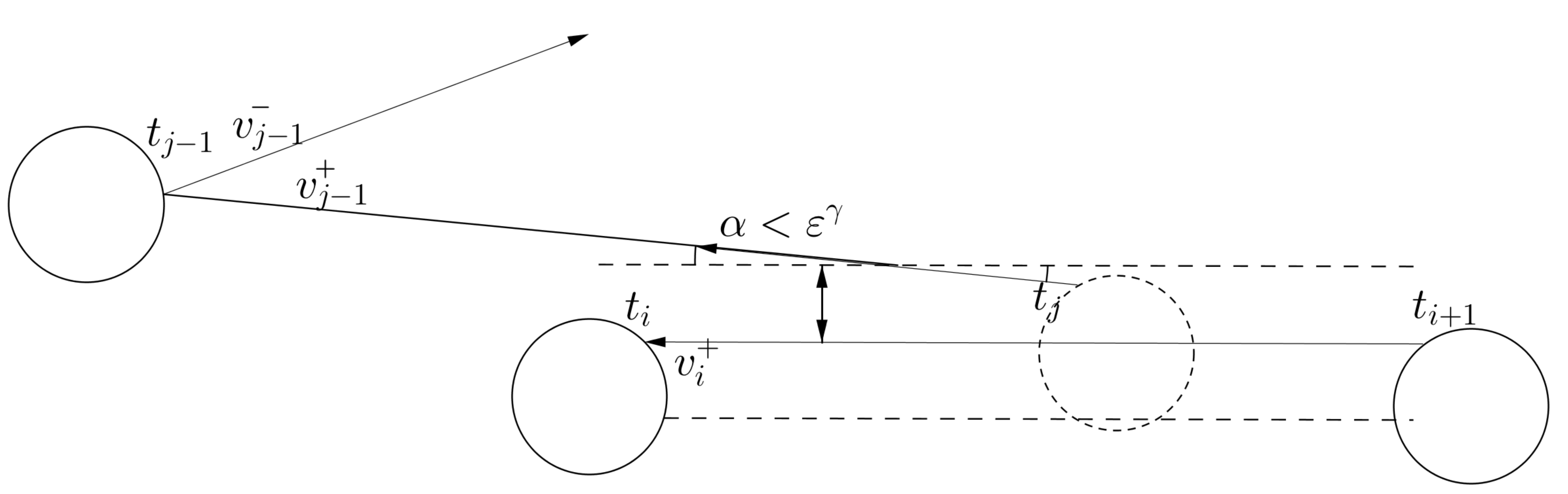}
\caption{Backward Interference-Second case}\label{fig:2}
\end{figure}

Concerning the second term in \eqref{Markov error int}, the condition $\alpha\leq\ep^{\gamma}$ implies that the $(j-1)$-th scattering angle $\theta_{j-1}$ can varies at most $\ep^{\gamma}$ (see Figure \ref{fig:2}). Then, fixing all 
the variables $\{t_h\}_{h=1}^{N}$, $\{ \rho_h\}_{h=1}^{N}$ except $\rho_{j-1}$, performing the change of variable $\rho_{j-1}\to \theta_{j-1}$ and recalling that the scattering cross section for a disk of unitary radius is given by $\frac{d\rho}{d\theta}=\frac 1 2 \sin\frac{\theta}{2}$ we obtain
\begin{equation}
\label{Markov error int2}
\begin{split}
\EE _{x,v}\big[ \sum_{i=1}^{N}\sum_{j>i}\chi_{int}^{i,j}\chi(\alpha\leq\ep^{\gamma})\big]
\leq & \,e^{-2\mu_{\ep}\ep t} \sum_{N\geq 1}(N)^2(2\mu_{\ep}\ep) ^{N}\frac{ t^{N}}{(N)!}C\ep^{\gamma}\\
\leq & \,C\ep^{\gamma}\eta_\ep^2 t^2.
\end{split}
\end{equation}
By choosing $\gamma=1/2$, from \eqref{Markov error int1} and \eqref{Markov error int2} we obtain
\begin{equation}
\label{Markov error int finale}
\EE _{x,v}[\chi_{int}]\leq  C\ep^{\frac 1 2}\eta_\ep^3t^2.
\end{equation}

Finally we consider the recollision event. We have
\begin{equation*} 
\chi_{rec}\leq \sum_{i=1}^{N}\sum_{j>i}\chi_{rec}^{i,j},
\end{equation*}
where  $\chi_{rec}^{i,j}=1$ if the $i$-th obstacle is recollided in the time interval $(t_{j},t_{j-1})$. Also in this case we have to take into account two possible situations, when 
$|b_i-b_{j-1}|> \ep^{\gamma}$ for a suitable positive $\gamma$ or when  
$|b_i-b_{j-1}|\leq \ep^{\gamma}$. Then
\begin{equation}
\begin{split}
\label{Markov error rec}
\EE _{x,v}[\chi_{rec}]
\leq &\, \EE _{x,v}\Big[ \sum_{i=1}^{N}\sum_{j>i}\chi_{rec}^{i,j}\chi\big(|b_i-b_{j-1}|> \ep^{\gamma} \big)\Big]\\&+\EE _{x,v}\Big[ \sum_{i=1}^{N}\sum_{j>i}\chi_{rec}^{i,j}\chi\big(|b_i-b_{j-1}|\leq \ep^{\gamma} \big)\Big].
\end{split}
\end{equation}

\begin{figure}[ht]
\includegraphics[scale= 0.25]{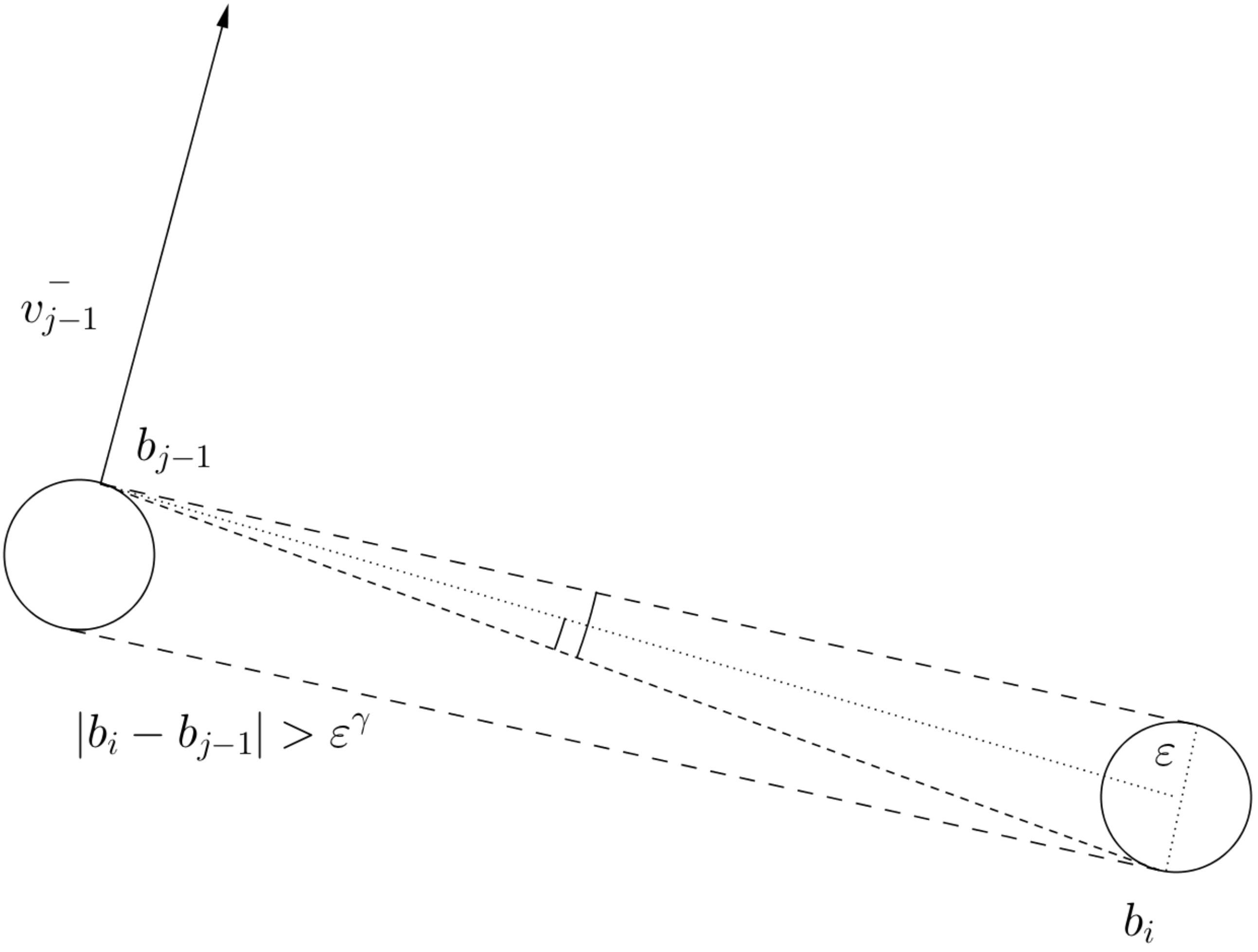}
\caption{Backward Recollision-First case}\label{fig:4}
\end{figure}

\begin{figure}[ht]
\includegraphics[scale= 0.2]{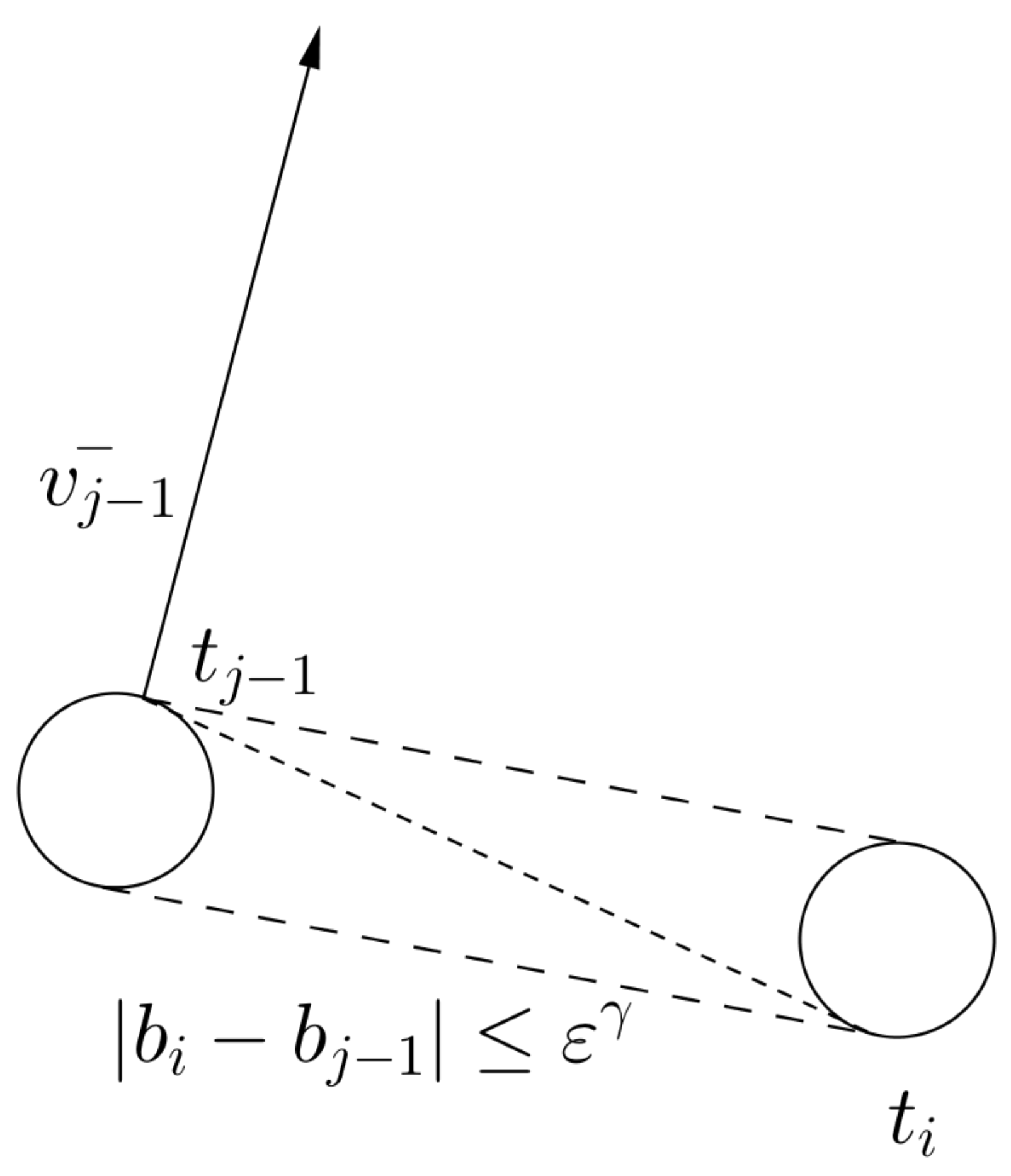}
\caption{Backward Recollision-Second case}\label{fig:3}
\end{figure}

We look at the first term. Using geometric arguments the condition $|b_i-b_{j-1}|> \ep^{\gamma}$ gives a bound for the $(j-1)$-th scattering angle $\theta_{j-1}$ (see Figure \ref{fig:4}). In particular it can varies at most $\ep/\ep^{\gamma}=\ep^{1-\gamma}$. Therefore, performing the change of variable $\rho_{j-1}\to \theta_{j-1}$ as before, we get
\begin{equation}
\begin{split}
\label{Markov error rec1}
&\EE _{x,v}\Big[ \sum_{i=1}^{N}\sum_{j>i}\chi_{rec}^{i,j}\chi\big(|b_i-b_{j-1}|> \ep^{\gamma} \big)\Big]\\&
\leq e^{-2\mu_{\ep}\ep t}  \sum_{N\geq 1}(N)^2(2\mu_{\ep}\ep) ^{N}\frac{ t^{N}}{(N)!}C\ep^{1-\gamma}\\&
\leq C\ep^{1-\gamma}\eta_\ep^2 t^2.
\end{split}
\end{equation}

If $|b_i-b_{j-1}|\leq \ep^{\gamma}$ a simple geometrical argument shows that the time interval $|t_{j-1}-t_j|$ is bounded by $\ep^{\gamma}$ (see Figure \ref{fig:3}). Hence,  following the same strategy as in \eqref{Markov error int1}, we obtain
\begin{equation}
\begin{split}
\label{Markov error rec2}
&\EE _{x,v}\Big[ \sum_{i=1}^{N}\sum_{j>i}\chi_{rec}^{i,j}\chi\big(|b_i-b_{j-1}|\leq \ep^{\gamma} \big)\big]\\&
\leq e^{-2\mu_{\ep}\ep t} \sum_{N\geq 1}(N)^2(2\mu_{\ep}\ep)^N\frac{ t^{N-1}}{(N-1)!}C\ep^{\gamma}\\&
\leq C\ep^{\gamma}\eta_\ep^3 t^2.
\end{split}
\end{equation}
As before we choose $\gamma=1/2$. Then from \eqref{Markov error rec1} and \eqref{Markov error rec2} we obtain
\begin{equation}
\label{Markov error rec finale}
\EE _{x,v}[\chi_{rec}]\leq  C\ep^{\frac 1 2}\eta_\ep^3t^2.
\end{equation}

\begin{figure}[ht]
\includegraphics[scale= 0.18]{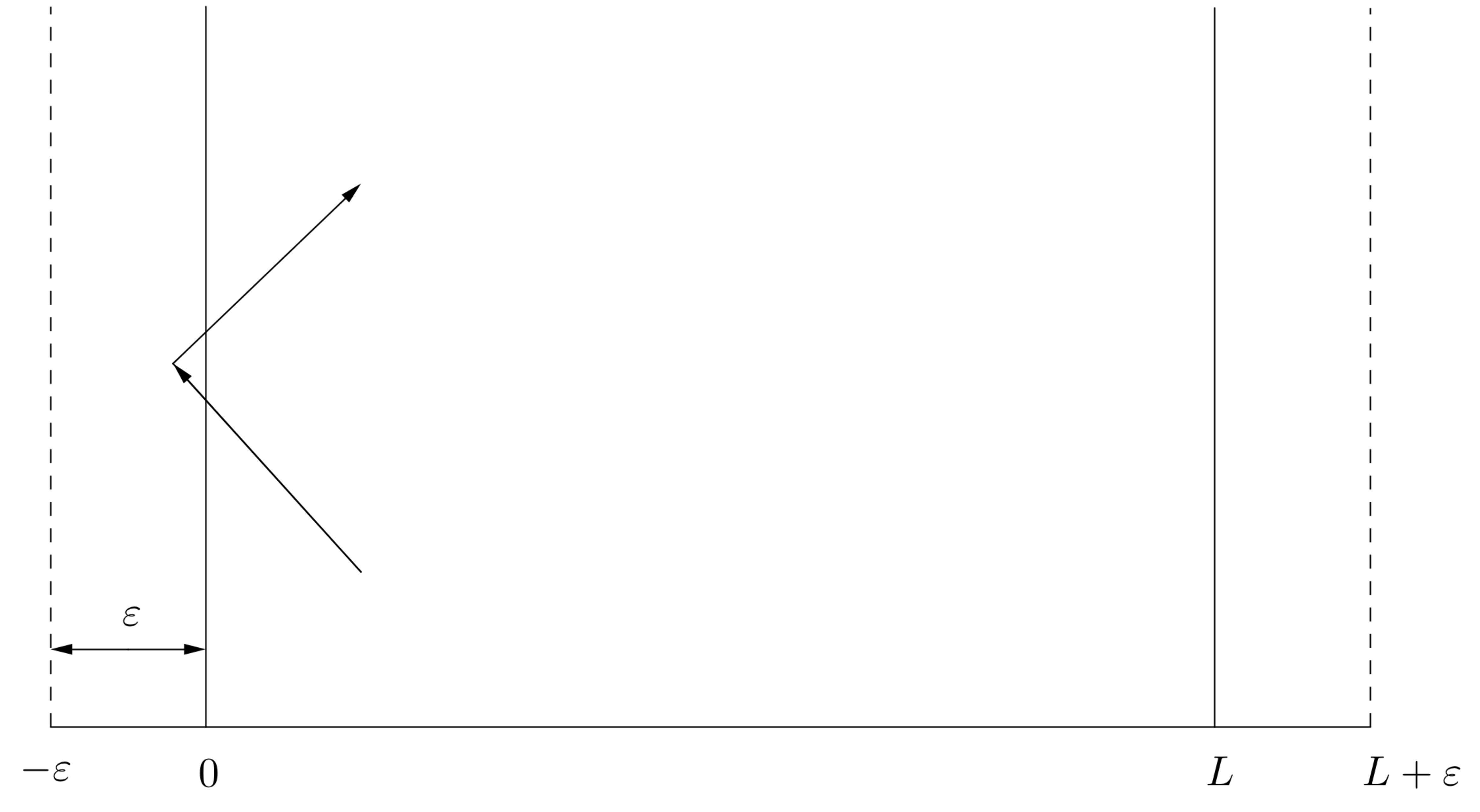}
\caption{$\Lambda\cup \{[-\ep,0]\times\R\cup[L,L+\ep]\times\R\}$}\label{fig:L}
\end{figure}
\vspace{0.4mm}

We now consider the expectation value for $(1-\chi_{\partial\Lambda})$, with $\chi_{\partial\Lambda}$ defined in \eqref{def:chiLambda}. Observe that $\chi_{\partial\Lambda}=1$ implies that $\xi_{\ep}(-(t-t_j))\in\Lambda^\textit{c}$ and $d(\xi_{\ep}(-(t-t_j)),\partial \Lambda)\leq\ep$ for some $j=1,\dots,N$.
As we can see in Figure \ref{fig:L}, 
by the same argument used to estimate the interference events in \eqref{Markov error int1} and \eqref{Markov error int2} we obtain
\begin{equation}
\label{Markov bd error}
\EE _{x,v}[\chi_{\partial\Lambda}]\leq  C\ep^{\frac 1 2}\eta_\ep^3t^2. 
\end{equation}

By estimates \eqref{Markov error int finale}, \eqref{Markov error rec finale} and \eqref{Markov bd error} we obtain 
$$
\|\ffi_1(\ep,t)\|_{\infty}\leq  C\ep^{\frac 1 2}\eta_\ep^3t^2,
$$
for some $C>0$.

\subsection{Proof of Proposition \ref{prop:fepinhepin}}
The proof follows the same strategy of the proof of Proposition \ref{th:propCIN}. Actually it is easier since it does not require the extension trick, but it follows directly by the recollision and interference estimates.

\vspace{15mm}
\indent\textbf{Acknowledgments.}\\ F. Pezzotti is partially supported by the Italian Project FIRB 2012 "Dispersive dynamics: Fourier Analysis and Variational Methods"
\vspace{10mm}

\setcounter{equation}{0}    
\def\theequation{A.\arabic{equation}}

\end{document}